\documentclass[lettersize, journal]{IEEEtran}
\usepackage{array}
\usepackage{float}
\usepackage{caption}
\usepackage{color}

\usepackage{graphicx} \usepackage{float} %\usepackage{subfigure} 
\usepackage{subcaption}
\usepackage{cite}
\usepackage{amsmath,amssymb,amsfonts}
\usepackage{amsthm}
\usepackage{graphicx}
\usepackage{textcomp}
\usepackage{xcolor}
\usepackage{svg}
\usepackage{algorithm}
\usepackage{algorithmic}
\usepackage{mathtools}
\usepackage{lipsum}
\usepackage{bm}

\newtheorem{proposition}{Proposition}
\newtheorem{remark}{Remark}

\newtheorem{definition}{Definition}
\newtheorem{theorem}{Theorem}

\newtheorem{lemma}{Lemma}

\newtheorem{corollary}{Corollary}

\allowdisplaybreaks
\makeatletter
\def\ScaleIfNeeded{%
\ifdim\Gin@nat@ width>\linewidth \linewidth \else \Gin@nat@width
\fi } \makeatother
\graphicspath{{Fig/}}
\begin{document}
\title{Multi-Mode Pinching-Antenna Systems: An Inter-Mode-Interference-Free Perspective}
% \title{Exploiting Multi-Mode Pinching-Antenna Systems for ISAC: Physical Modeling and Optimization}
\author{Jingjing Zhao, Songtao Xue, Xidong Mu, Kaiquan Cai, and Zhiguo Ding,~\IEEEmembership{Fellow,~IEEE}
\thanks{J. Zhao, S. Xue and K. Cai are with the School of Electronics and Information Engineering, Beihang University, Beijing, 100191, China, and also with the State Key Laboratory of CNS/ATM, Beijing, 100191, China (e-mail:\{jingjingzhao, xuesongtao, caikq\}@buaa.edu.cn). X. Mu is with the Centre for Wireless Innovation (CWI), Queen's University Belfast, Belfast, BT3 9DT, U.K (e-mail: x.mu@qub.ac.uk). Z. Ding  is with the School of Electrical and Electronic Engineering, Nanyang Technological University, 639798, Singapore (e-mail:
zhiguo.ding@ntu.edu.sg).}}
\maketitle
\begin{abstract}
The physical model of multi-mode pinching-antenna systems (PASS) is proposed based on the coupled-mode theory. Within the considered model, multiple guided modes are simultaneously excited in a single waveguide and exploited as independent signal-bearing channels, thus providing additional \textit{modal degrees of freedom} for signal transmission. Under the local electromagnetic perturbations introduced by pinching antennas (PAs), the undesired guided-modes coupling is explicitly investigated, and the resulting \textit{inter-mode interference (IMI)} issue is revealed. The sufficient hardware-design condition for achieving the IMI-free regime is further established, for which the practical feasibility is validated through full-wave electromagnetic simulations. Then, a tractable signal model is derived for multi-mode PASS operating over the IMI-free regime. {To demonstrate the benefits of the proposed multi-mode PASS model, the integrated sensing and communications (ISAC) is studied as a representative application scenario, where the base station simultaneously communicates with a communication user and senses a target by using two guided modes.} A joint baseband and pinching beamforming optimization problem is formulated for the minimization of the Cramér--Rao bound for target localization, subject to the minimum communication rate requirement, under both continuous and discrete PAs activation cases. An alternating optimization-based algorithm is developed to address the formulated non-convex problem. For the baseband beamforming, a penalty-based successive convex approximation method is invoked. For the pinching beamforming, a particle swarm optimization algorithm and a two-sided matching algorithm are proposed for the continuous and discrete PAs activation cases, respectively. Numerical results obtained in the ISAC application scenario demonstrate the superiority of the proposed multi-mode PASS model over the single-mode PASS.
\end{abstract}

\begin{IEEEkeywords}
Beamforming, inter-mode interference, multi-mode pinching-antenna systems.

\end{IEEEkeywords}
\section{Introduction}
Recently, pinching-antenna systems (PASS) have emerged as a promising flexible-antenna paradigm, owing to the capability of dynamically deploying dielectric elements, termed pinching antennas (PAs), along low-loss waveguides~\cite{10945421,pinching2}. Compared to conventional flexible-antenna systems such as fluid antennas~\cite{9264694}, movable antennas~\cite{10318061} and reconfigurable intelligent surfaces~\cite{marco}, the unique features of PASS are two-fold. Firstly, PASS can extend waveguides toward services area, allowing PAs to be deployed close to users, thus constructing line-of-sight (LoS)-dominant communication links and effectively reducing free-space path loss~\cite{pinching-noma}. Secondly, PASS apertures can be reconfigured by adding or releasing low-cost PAs, which facilitates a practical approach to realize more flexible multiple-input multiple-output (MIMO) transmissions~\cite{pinching-PLS}. 

The aforementioned benefits have stimulated growing research interests in PASS. For example, the authors of~\cite{11202577} investigated the data rate improvement brought by PASS through optimizing PAs positions, which was so-called the pinching beamforming. The authors of~\cite{pinching-arraygain} further characterized the achievable array gain obtained by PASS and derived the optimal number of PAs for realizing the maximum array gain. In~\cite{11263923}, the authors studied the transmit power minimization problem for PASS subject to users' minimum communication rate constraints. For the multi-user uplink communication scenario, the authors in~\cite{pinching-uplink} aimed at maximizing the minimum achievable rate of users, so as to obtain a balance between throughput and fairness. Exploiting the dynamic feature of wireless propagation environments, a novel PASS-based environment division multiple access scheme was proposed in~\cite{11488474}, where the intended-signal strength and multi-user interference were properly managed by adjusting PAs positions.  
{The flexible deployment and beam-shaping capabilities of PASS also make them attractive for integrated sensing and communications (ISAC), where communication and sensing functionalities share the same spectrum and hardware resources~\cite{CRB}. The authors of~\cite{Related-ISAC-LHC} studied the ISAC performance in multiple-waveguide PASS, where transmitting PAs and a uniform linear array (ULA) were adopted at the base station (BS) for ISAC signal transmission and echo signal reception, respectively. Moreover, it was demonstrated in~\cite{11212813} that PASS outperformed conventional MIMO systems in terms of communication rate under the radar signal-to-noise budget, through jointly optimizing baseband and pinching beamforming. Considering the discrete PAs activation case in~\cite{11599010}, a PASS-assisted ISAC framework was established under equal and proportional power radiation models. Moreover, 
By applying PASS in low-altitude scenarios, the authors of~\cite{11288076} designed a joint pinching beamforming, PAs activation states, and waveguide power allocation algorithm, with the aim of enhancing both sensing and communication performance.} 

\begin{figure*}[t]
    \centering
    \begin{subfigure}{0.322\linewidth}
		\centering	\includegraphics[width=1\linewidth]{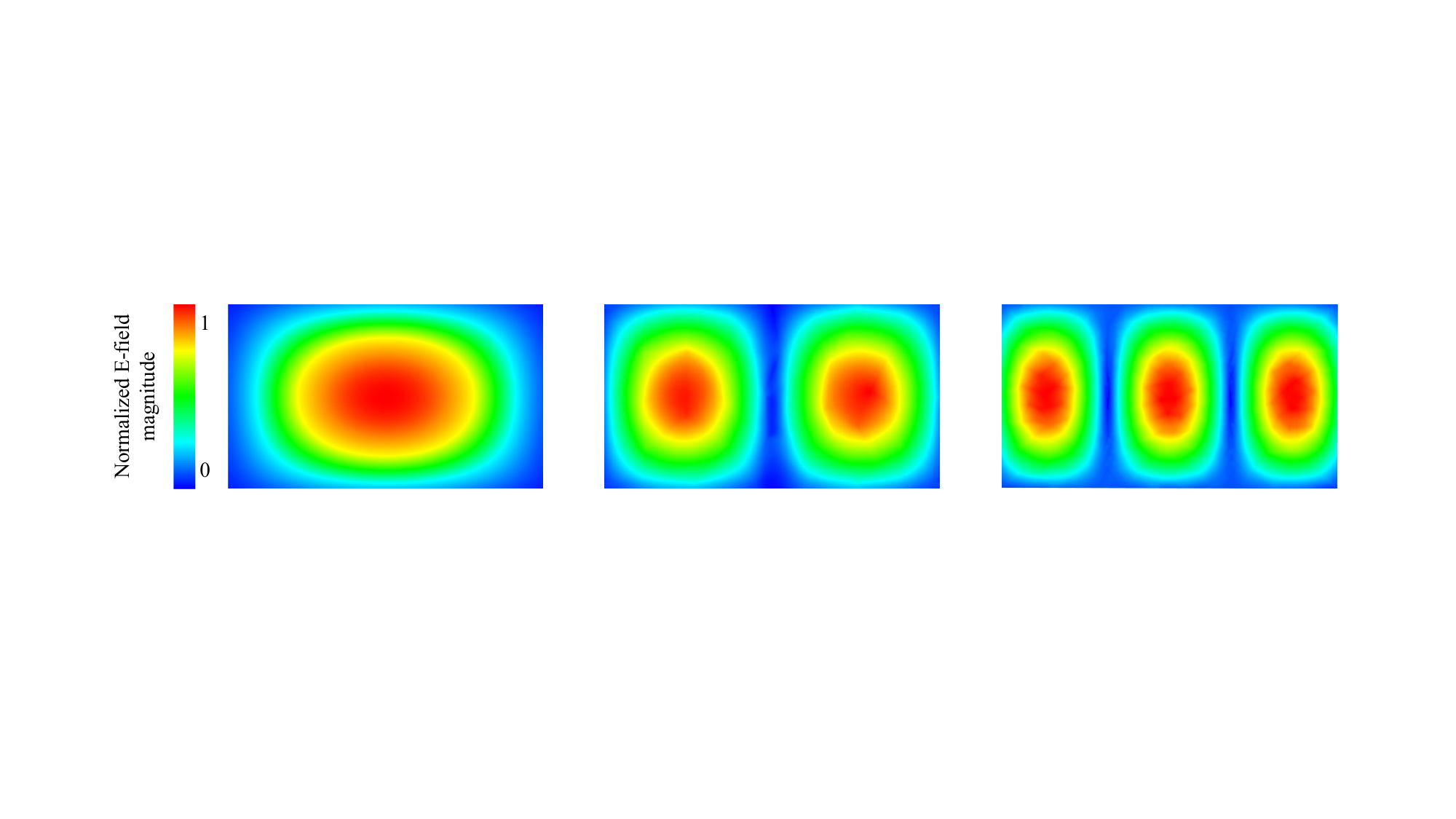}
        \captionsetup{margin={4.5mm,-4.5mm}}
		\caption{quasi-TE$_{00}$}
		\label{quasi-TE-00}
	\end{subfigure}
     \hspace{6mm}
	\centering
	\begin{subfigure}{0.28\linewidth}
		\centering
		\includegraphics[width=1.02\linewidth]{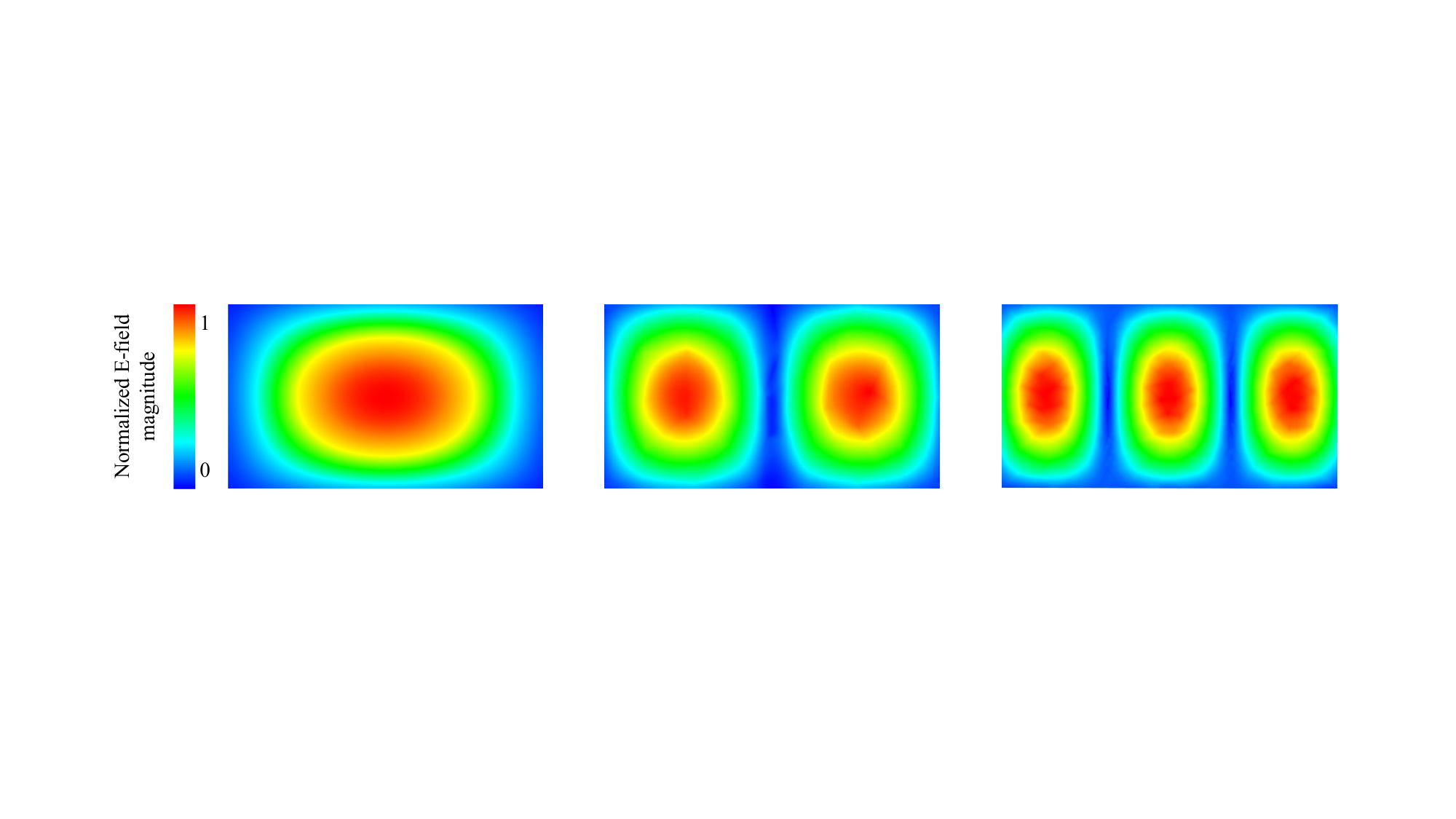}
		\caption{quasi-TE$_{10}$}
		\label{quasi-TE-10}
	\end{subfigure}
    \hspace{6mm}
	\centering
	\begin{subfigure}{0.26\linewidth}
		\centering
		\includegraphics[width=1\linewidth]{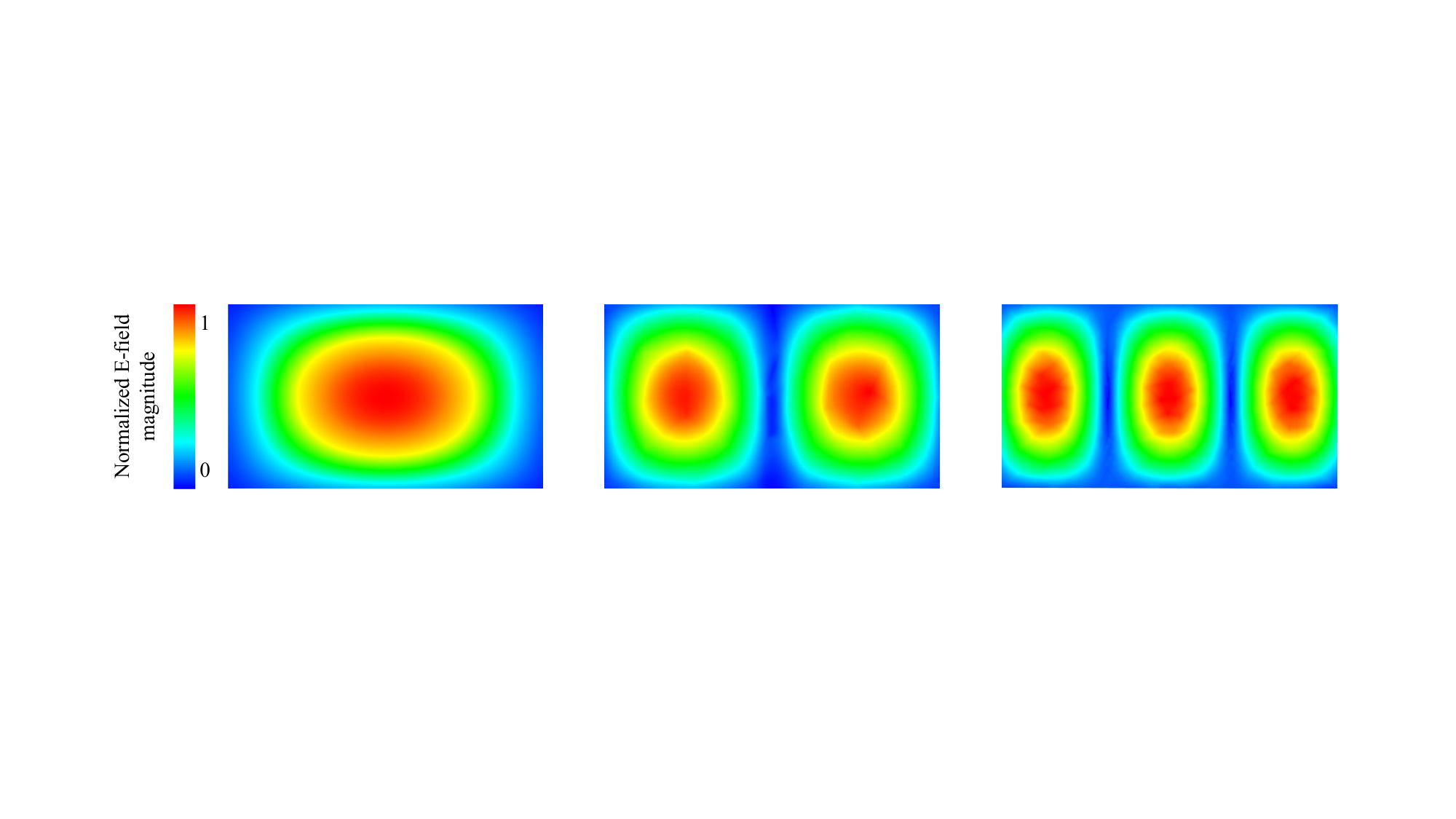}
        \captionsetup{margin={1.5mm,-1.5mm}}
		\caption{quasi-TE$_{20}$}
		\label{quasi-TE_20}%文中引用该图片代号
	\end{subfigure}
    
    \caption{Electric-field magnitude distributions of three typical quasi-TE modes over the cross section of a rectangular dielectric waveguide operating at $15$~GHz. The waveguide is made of Arlon CLTE-AT\textsuperscript{TM} with a relative permittivity of $3$ and a $28~\mathrm{mm}\times15~\mathrm{mm}$ cross section.}
    \label{fig:mode}
\end{figure*}

Existing studies on PASS have focused primarily on the simple single-mode architecture, where the electromagnetic-field evolution inside the waveguide is characterized by a single dominant guided mode. As a result, each waveguide can support only a single independent data stream due to the limited multiplexing capability. {This limitation may restrict the applicability of PASS in scenarios requiring multiple simultaneous data streams or higher multiplexing capabilities.}
For example, in PASS-enabled ISAC systems, sufficient transmit degrees of freedom (DoFs) are important for flexibly directing signal resources toward communication users (CUs) and sensing targets (STs). However, the multiplexing limitation of single-mode PASS results in a rank-one transmit covariance matrix, which may lead to a pronounced communication-sensing tradeoff. Previous works~\cite{Related-ISAC-LHC,11212813,11599010,11288076} proposed to employ multiple waveguides to support the transmission of multiple data streams, which inevitably incurs increased hardware complexity and deployment costs.

To overcome the above drawbacks of conventional single-mode PASS, in this work, we introduce the \textit{multi-mode PASS} architecture, where multiple guided modes in a single waveguide are exploited  as independent signal-bearing channels. In practice, as the transverse dimension of the dielectric waveguide increases, or as the operating frequency becomes sufficiently high, multiple propagating guided modes can be supported in a waveguide. Each guided mode corresponds to an eigensolution of Maxwell's equations subject to the material distribution and boundary conditions of the waveguide~\cite{Yariv1973CMT}. Fig.~\ref{fig:mode} demonstrates three typical transverse electric fields of a rectangular dielectric waveguide, i.e., quasi\footnote{``Quasi-TE'' indicates that the corresponding eigenmode exhibits a predominantly TE-like field distribution rather than being a strictly transverse-electric mode. This is because, the dielectric boundaries generally lead to hybrid electromagnetic modes in which the longitudinal components of both the electric and magnetic fields are nonzero.}-TE$_{00}$, quasi-TE$_{10}$, and quasi-TE$_{20}$, where quasi-$\mathrm{TE}_{00}$ is the fundamental mode and exhibits the simplest transverse field distribution, while quasi-$\mathrm{TE}_{10}$ and quasi-$\mathrm{TE}_{20}$ are higher-order modes with increased spatial variations.  

Note that, some initial works~\cite{xiaoxia-multimode,shaodan-multimode} have already started to exploit the additional multiplexing gain offered by multi-mode PASS. The authors of~\cite{xiaoxia-multimode} proposed a PA grouping scheme to support mode-domain multiplexing for multi-user communications, where the baseband and pinching beamforming were jointly optimized. In~\cite{shaodan-multimode}, a polarization-aware full-wave electromagnetic model was established to characterize the mode-dependent radiation patterns and polarization states. It is assumed in both~\cite{xiaoxia-multimode} and~\cite{shaodan-multimode} that, the orthogonality among guided modes remains preserved after introducing PAs. However, such an assumption does not hold for all conditions, given that a PA introduces a local electromagnetic perturbation to the waveguide. In this work, we explicitly account for the PA-induced undesired coupling among guided modes, and reveal the resulting issue of signal leakage and/or exchange among different guided modes, namely the \textit{inter-mode interference (IMI)}. We establish sufficient hardware-design conditions for achieving IMI-free multi-mode propagation in the waveguide, based on which a tractable multi-mode PASS signal model is given. We investigate ISAC as a representative application scenario to demonstrate the benefits of additional \textit{modal DoFs} brought by the proposed multi-mode PASS model. 

The main contributions of this paper are summarized as follows:

% To exploit the additional multiplexing DoFs offered by multiple guided modes in the waveguide, in this work, we investigate a multi-mode PASS-enabled ISAC framework. By simultaneously exciting multiple mutually orthogonal guided modes, multi-mode PASS provides additional modal DoFs for more flexible joint shaping of the communication and sensing signals. Compared with the single-mode counterpart, multi-mode PASS enlarges the feasible transmit signal space and relaxes the inherent rank limitation of single-mode transmission, thereby facilitating a more favorable performance tradeoff between communication and sensing. 
\begin{itemize}
\item  {We propose the physical model of multi-mode PASS, where multiple guided modes are simultaneously excited within a single waveguide and exploited as independent signal-bearing channels. Under the local perturbations introduced by PAs, we explicitly investigate the undesired guided-modes coupling, and reveal the resulting IMI issue. We further establish the hardware-design conditions for achieving the IMI-free regime, whose practical feasibility is validated through full-wave simulations using the High Frequency Structure Simulator (HFSS). Based on the IMI-free regime, we derive a tractable signal model for the multi-mode PASS.} 

% To preserve the independent in-waveguide propagation of multiple guided modes under the perturbation induced by PAs, we characterize the conditions for achieving IMI-free in-waveguide transmission and validate its practical feasibility through full-wave High Frequency Structure Simulator (HFSS) simulations. Based on the resulting IMI-free regime, we derive a tractable signal model for multi-mode PASS.
\item We study ISAC as a representative application scenario to demonstrate the benefits of additional modal DoFs offered by the proposed multi-mode PASS model. In the considered scenario, the BS simultaneously communicates with a CU and senses a ST by exploiting two guided modes in the waveguide. For both the continuous and discrete PAs activation cases, we formulate a joint baseband and pinching beamforming optimization problem that minimizes the Cram\'er-Rao bound (CRB) for target localization, subject to the minimum communication rate requirement.
\item We develop an AO-based scheme for iteratively solving the baseband and pinching beamforming subproblems. For the baseband beamforming, the penalty-based successive convex approximation (SCA) method is invoked. For the pinching beamforming, the particle swarm optimization (PSO) algorithm and matching algorithm are proposed to explore the available PAs positions space under the continuous and discrete activation cases, respectively.
\item {We conduct numerical results in the ISAC application scenario, to demonstrate the merits of the proposed IMI-free multi-mode PASS model. Numerical results unveil that the multi-mode PASS significantly improves the ISAC performance compared to the single-mode PASS, especially when the user deployment region expands or the minimum communication rate requirement becomes more stringent. 
This demonstrates that the additional modal DoFs are particularly beneficial in challenging ISAC scenarios, where the limited DoFs of the single-mode PASS become a major performance bottleneck.} 
%\textcolor{blue}{It is also shown that, with a sufficiently large number of candidate positions, the discrete activation case can achieve superior performance to the continuous activation case, by overcoming the PAs positions ordering constraint. }}

\end{itemize}

The rest of this paper is organized as follows. Section II introduces the multi-mode PASS physical modeling and the IMI analysis. Section III studies ISAC as a representative application scenario of the proposed IMI-free multi-mode PASS model, and formulates the joint baseband and pinching beamforming optimization problem. Section IV develops the AO-based algorithm for solving the resultant optimization problem. Section V provides numerical results to evaluate the performance of the proposed framework. Finally, section VI concludes the paper and summarizes the main findings.

\textit{Notations:}
Scalars, vectors, and matrices are denoted by lowercase letters, boldface lowercase letters, and boldface capital letters, respectively. $\mathbb{C}^{M\times N}$ and $\mathbb{R}^{M\times N}$ denote the sets of all $M\times N$ matrices with complex and real entries, respectively. $\mathbf{a}^*$, $\mathbf{a}^T$, and $\mathbf{a}^H$ represent the conjugate, transpose, and conjugate transpose of the vector $\mathbf{a}$, respectively. $\text{Re}\left\{\mathbf{A}\right\}$ and $\text{Im}\left\{\mathbf{A}\right\}$ denote the real and imaginary part of the matrix $\mathbf{A}$. $\mathrm{tr}\left(\mathbf{A}\right)$ and $\mathrm{rank}\left(\mathbf{A}\right)$ represent the trace and the rank of the matrix $\mathbf{A}$. $\mathbb{E}\left[\cdot\right]$ represents the statistical expectation. $\mathcal{CN}\left(a,b\right)$ and $\mathcal{U}[a,b]$ represent the Gaussian distribution with mean $a$ and variance $b$, and the uniform distribution over the interval $[a,b]$, respectively. $|\cdot|$ denotes the absolute value of a scalar. $\|\cdot\|$ represents the norm of a vector. $\mathbf{I}_M$ represents the $M\times M$ identity matrix. 

\section{Multi-Mode PASS Physical Modeling and Fundamental Analysis}
In this section, we first develop the multi-mode waveguide and PAs coupling model. Different from existing works that directly treat multiple guided modes as orthogonal channels, we explicitly investigate the PA-induced undesired mode coupling, and reveal the resulting IMI issue. We further establish the conditions under for achieving the IMI-free regime, which is validated through full-wave electromagnetic simulations. Finally, a tractable multi-mode PASS signal model is given. 

\subsection{Multi-Mode Waveguide-PA Coupling Model}
% For a longitudinally invariant
% waveguide, different guided eigenmodes are mutually orthogonal, which enables each of them to serve as an independent information-bearing channel~\cite{Mohanty2017,Elsawaf2024DualMode}. 
Suppose that a longitudinally invariant
waveguide deployed along the $x$-axis supports $M$ propagating guided modes, indexed by the set $\mathcal{M}=\left\{1, \dots, M\right\}$. The total electric field inside the $M$-mode waveguide is given by
\begin{equation}
\mathbf{E}_{\text{w}}(x,y, z) = \sum_{m=1}^{M} A_m(x)\, \mathbf{E}_m(y,z)\, e^{-j\beta_m x},
\label{eq:waveguide-field}
\end{equation}
where $A_m(x)$ denotes the slowly-varying complex amplitude of mode $m$,
$\mathbf{E}_m(y,z)$ represents the power-normalized transverse field distribution of mode $m$, and $\beta_m$ is the corresponding propagation constant. $\mathbf{E}_m(y,z)$ and $\beta_m$ are uniquely determined by the waveguide geometry and material parameters.

The localized PA can be regarded as a controlled perturbation that introduces coupling among the waveguide and PA modes, thus enabling power leakage into radiation channels. To facilitate a tractable coupled-mode formulation, we assume each PA supports only the fundamental mode. 
Then, the electric field inside the PA can be expressed as 
\begin{equation}
\mathbf{E}_{\text{p}}(x,y,z)
=
B(x)
\mathbf{E}_{\text{p}}(y,z)
e^{-j\beta_{\text{p}} x},
\label{eq:pa-field}
\end{equation}
where $B(x)$, $\mathbf{E}_{\text{p}}(y,z)$ and $\beta_{\text{p}}$ denote the slowly-varying complex amplitude, the power-normalized transverse field distribution, and the propagation constant of the PA mode, respectively. 

Since the PA is placed outside the dielectric waveguide and interacts with the guided modes through their evanescent fields, the resulting $(M+1)$-mode ($M$ guided modes and one PA mode) coupled system can be modeled based on the coupled-mode theory~\cite{Huang1984CMT} under the weak-perturbation assumption. Specifically, we collect the amplitudes of the $M$ guided modes and the PA mode into $\mathbf{f}(x)=\left[A_1(x), \dots, A_{M}(x), B(x)\right]^T$, then the coupled mode equations can be given by~\cite{Huang1984CMT}
\begin{equation}
    \label{eq:CMT}
    \frac{\mathrm{d}\mathbf{f}(x)}{\mathrm{d}x}=-j\mathbf{C}(x)\mathbf{f}(x),
\end{equation}
where
\begin{equation}
    \mathbf{C}(x)
    =
    \begin{bmatrix}
        \mathbf{C}_{\text{ww}}(x)
        &
        \mathbf{c}_{\text{wp}}(x)
        \\
        \mathbf{c}_{\text{pw}}(x)
        &
        0
    \end{bmatrix}\in\mathbb{C}^{(M+1)\times (M+1)}.
    \label{eq:interaction_coupling_matrix}
\end{equation}
In~\eqref{eq:interaction_coupling_matrix}, 
$\mathbf{C}_{\text{ww}}(x)\in\mathbb{C}^{M\times M}$ denotes the guided-modes coupling matrix, whose elements are given by
\begin{equation}
    \left[
        \mathbf{C}_{\text{ww}}(x)
    \right]_{mm'}
    =
    \begin{cases}
        \kappa_{mm'}
        e^{-j(\beta_{m'}-\beta_m)x},
        & m\neq m',\\
        0,
        & m=n.
    \end{cases}
    \label{eq:wg_coupling_matrix_entries}
\end{equation}
Here, $\kappa_{mm'}$ denotes the coupling coefficient between the waveguide modes $m$ and $m'$, given by
\begin{equation}
    \kappa_{mm'}
=
\frac{\omega\epsilon_0}{4}
\iint_{\Omega_p}
\Delta\epsilon_{\mathrm{p}}(y,z)
\mathbf E_m^{*}(y,z)
\cdot
\boldsymbol{E}_n(y,z)
\,\mathrm{d}y\mathrm{d}z,
\label{eq:kappa}
\end{equation}
where $\omega$ is the angular frequency corresponding to the carrier frequency, $\epsilon_{0}$ represents the vacuum permittivity, $\Omega_{\text{p}}$ is the coupling region, and 
$\Delta\epsilon_{\mathrm p}(y,z)$ denotes the transverse distribution 
of the permittivity perturbation, which is assumed 
to be invariant along the $x$-axis for a longitudinally uniform PA. Still referring to Eq.~\eqref{eq:interaction_coupling_matrix}, $\mathbf{c}_{\text{wp}}(x)\in\mathbb{C}^{M\times 1}$ denotes the guided modes-PA coupling vector, whose elements are given by
\begin{equation}
    \left[
        \mathbf{c}_{\mathrm{wp}}(x)
    \right]_m
    =
    \kappa_{m\text{p}}
    e^{-j(\beta_{\text{p}}-\beta_m)x},
    \label{eq:mode_pa_coupling_vector}
\end{equation}
where $\kappa_{m\text{p}}$ denotes the coupling coefficient between guided mode $m$ and the PA mode. 
For the reciprocal and lossless coupling between the guided modes and the PA mode, the coupling vectors satisfy $\mathbf{c}_{\text{pw}}(x)=\mathbf{c}_{\text{wp}}^{{H}}(x)$.

\begin{figure}[t]
    \centering
    \begin{subfigure}{\linewidth}
        \centering
        \includegraphics[width=0.92\linewidth]{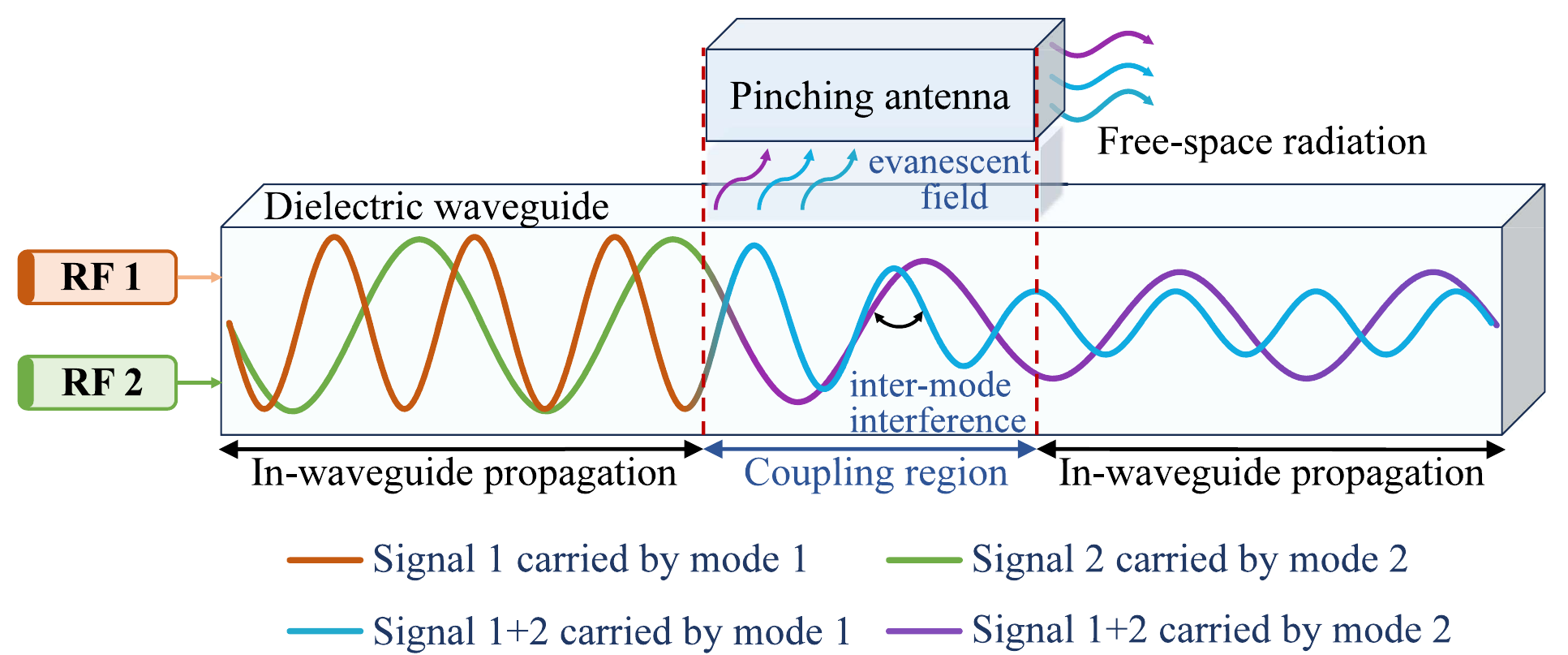}
        \caption{In-waveguide signal propagation with IMI.}
        \label{fig:couple_a}
    \end{subfigure}
    \par\vspace{4mm}
    \begin{subfigure}{\linewidth}
        \centering
        \includegraphics[width=0.93\linewidth]{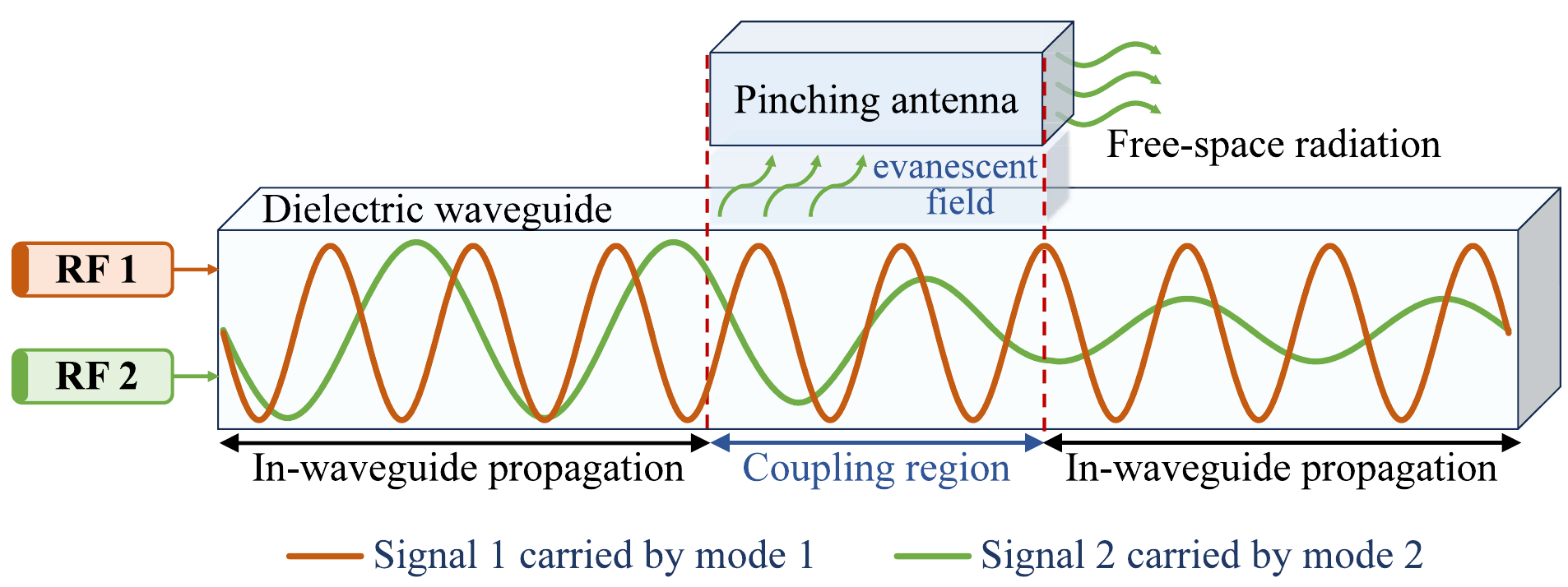}
        \caption{In-waveguide signal propagation without IMI.}
        \label{fig:couple_b}
    \end{subfigure}
    \caption{Demonstration of multi-mode PASS in-waveguide signal propagation.}
    \label{fig:sys_model}
\end{figure}
Based on Eq.~\eqref{eq:CMT}, we can obtain
\begin{equation}
    \mathbf{f}(x)
    =
    \mathbf{U}(x)\mathbf{f}(0),
    \label{eq:joint_modal_solution}
\end{equation}
where $\mathbf{U}(x)\in\mathbb{C}^{(M+1)\times (M+1)}$ is given by
\begin{align}
    \mathbf{U}(x)
    ={}&
    \mathbf{I}
    -
    j\int_{0}^{x}
    \mathbf{C}(\xi_1)\mathrm{d}\xi_1
    \notag\\
    &{}
    +
    (-j)^2
    \int_{0}^{x}
    \int_{0}^{\xi_1}
    \mathbf{C}(\xi_1)\mathbf{C}(\xi_2)\mathrm{d}\xi_2\,\mathrm{d}\xi_1
    +\cdots.
    \label{eq:dyson_expansion_phase_extracted}
\end{align}
{We can further partition $\mathbf{U}(x)$ as}
\begin{equation}
    \mathbf{U}(x)
    =
    \begin{bmatrix}
        \mathbf{T}_{\text{ww}}(x)
        &
        \mathbf{t}_{\text{wp}}(x)
        \\
        \mathbf{t}_{\text{wp}}(x)
        &
        T_{\text{pp}}(x)
    \end{bmatrix},
    \label{eq:partitioned_propagation_operator}
\end{equation}
where $\mathbf{T}_{\text{ww}}(x)\in\mathbb{C}^{M\times M}$ denotes the guided-modes complex amplitude transfer matrix, whose $(m,m')$-th entry is the amplitude transfer coefficient from guided mode $m'$ to $m$, $\mathbf{t}_{\text{wp}}(x)\in\mathbb{C}^{M\times 1}$ and $\mathbf{t}_{\text{pw}}(x)\in\mathbb{C}^{1\times M}$ represent the amplitude transfer vectors from the PA mode to guided modes and vice versa, respectively, and $T_{\text{pp}}(x)$ denotes the PA mode's self-transfer coefficient.
\subsection{Investigation of PA-Induced Inter-Mode Interference }
Without loss of generality, assume that the coupling starts at $x=0$ and ceases at $x=L_{\text{p}}$, with $L_{\text{p}}$ denoting the length of the PA. Given the initial condition of ${A}_m(0)
    =s_m, \forall m$ and $B(0)=0$, where $s_m$ is the signal excited into guided mode $m$ via its dedicated feed, we have
\begin{equation}
    A_m(x)
    =
    T_{mm}(x)s_m
    +   \sum_{\substack{m'\neq m}}
    T_{mm'}(x)s_{m'},
    \label{eq:modal_input_output_relation}
\end{equation}
\begin{equation}
    B(x)
    =
    \sum_{m=1}^{M}T_{\text{p}m}(x)s_m,
\end{equation}
where $T_{mm}(x)
    =
    \left[
        \mathbf{T}_{\mathrm{ww}}(x)
    \right]_{mm}$, $T_{mm'}(x)
    =
    \left[
        \mathbf{T}_{\mathrm{ww}}(x)
    \right]_{mm'}$ and $T_{\text{p}m}(x)=\left[
        \mathbf{t}_{\mathrm{\text{p}w}}(x)
    \right]_m$. Note that, $T_{mm}(x)$, $T_{mm'}(x)$, and $T_{\text{p}m}(x)$ account for not only the direct transition, but also all higher-order transitions mediated by other modes. For example, $T_{mm'}(x)$ can be expanded in the form of $T_{mm'}(x)=T_{mm'}^{(1)}(x)+T_{mm'}^{(2)}(x)+\cdots$, where $T_{mm'}^{(1)}(x)$ denotes the direct first-order transition from mode $m'$ to $m$, given by
\begin{equation}
\label{eq:m-n-direct-transition}
    T_{mm'}^{(1)}(x)
    =
    -j\kappa_{mm'}
    \int_{0}^{x}
    e^{-j(\beta_{m'}-\beta_m)\xi}
    \mathrm{d}\xi.
\end{equation}
For any $o>1$, $T_{mm'}^{(o)}(x)$ denotes the higher-order transmission from mode $m'$ to $m$. For example, the second-order contribution of $T_{mm'}^{(2)}$ is given by Eq.~\eqref{eq:second-order-CMT} shown on the top of the next page, which consists of the two-step indirect transition from mode $m'$ to $r$ and then to $m$, as well as the PA-mediated transition from mode $m'$ to PA and then to mode $m$. 
\begin{figure*}[ht]
    \begin{align}
        T_{mm'}^{(2)}(x)
=&\underbrace{{(-j)^2
\sum_{\substack{r=1\\ r\neq m,m'}}^{M}
\kappa_{mr}
\kappa_{rm'}\times
\int_{0}^{x} 
\int_{0}^{\xi_{1}} e^{-j(\beta_{r}-\beta_{m})\xi_{1}}
e^{-j(\beta_{m'}-\beta_{r})\xi_{2}}\mathrm{d}\xi_{2}\,\mathrm{d}\xi_{1}}}_{\text{amplitude transfer coefficient of path } m'\rightarrow r\rightarrow m}\nonumber\\
&\underbrace{+(-j)^2 \kappa_{m\text{p}}\kappa_{\text{p}m'}
\int_{0}^{x} 
\int_{0}^{\xi_{1}} 
e^{-j\left(\beta_{\text{p}}-\beta_{m}\right)\xi_{1}}
e^{-j\left(\beta_{m'}-\beta_{\text{p}}\right)\xi_{2}}\mathrm{d}\xi_{2}\,\mathrm{d}\xi_{1}}_{\text{amplitude transfer coefficient of path } m'\rightarrow \text{PA}\rightarrow m}.
    \label{eq:second-order-CMT}
    \end{align}
\hrulefill
\end{figure*}
$T_{mm}(x)$ and $T_{\text{p}m}(x)$ can be expanded in the similar way, for which the details are omitted here.
\begin{remark}
As observed in Eq.~\eqref{eq:modal_input_output_relation}, after the coupling among $(M+1)$ modes, the signal carried by each
guided mode $m$ contains not only its assigned signal $s_m$, but also signals carried by other guided modes, i.e., $s_{m'}, m'\neq m$. This phenomenon is referred to as the \textbf{inter-mode interference (IMI)}, as shown in Fig.~(\ref{fig:couple_a}).
\end{remark}

{Note that, if the PA exhibits non-negligible coupling with multiple
guided modes, it may serve as a common coupling path that redistributes
signal components among them, thereby leading to IMI, for which the second term on the right hand side of Eq.~\eqref{eq:second-order-CMT} demonstrates a good example. Therefore, to suppress IMI, each PA should exhibit {\textit{modal
selectivity}} with respect to one assigned guided mode, denoted by $q$. Specifically, the PA should couple predominantly to the assigned mode $q$, 
while keeping negligible coupling to other guided modes. Based on this modal-selectivity design, we next establish the conditions for achieving approximately IMI-free in-waveguide propagation.}
\begin{proposition}[Approximately IMI-Free In-Waveguide Propagation]
Define $\epsilon_{mm'}
\triangleq
\frac{2\left|\kappa_{mm'}\right|}
{\left|\beta_m-\beta_{m'}\right|}, m\neq m'$
,
$\epsilon_{\text{p}m}
\triangleq
\frac{2\left|\kappa_{\text{p}m}\right|}
{\left|\beta_{\text{p}}-\beta_m\right|}, m\neq q$, and $\epsilon
\triangleq
\max
\left\{
\max_{m\neq m'}\epsilon_{mm'},
\max_{m\neq q}\epsilon_{\text{p}m}
\right\}$.
Under the higher-order non-resonance assumption~\cite{Huang2009CMT} for undesired coupling paths, the IMI-to-signal power ratio satisfies 
\begin{equation}
    \frac{P_m^{\text{I}}}{P_{\text{s}}}=\mathcal{O}\left(\left(M-1\right)\epsilon^2\right), \quad \forall m,
\end{equation}
where $P_m^{\text{I}}$ and $P_{\text{s}}=\mathbb{E}\left[s_ms_m^*\right]$ denote the aggregated IMI power and the average signal power associated with mode $m$, respectively. Therefore, when $\left(M-1\right)\epsilon^2 \ll 1$, we have
\begin{equation}
    \frac{P_m^{\text{I}}}{P_{\text{s}}} \ll 1, \quad \forall m,
\end{equation}
which indicates that the $M$ guided modes can be regarded as
mutually independent signal-bearing channels approximately free of IMI. 
\label{prop:IMI-free}
\end{proposition}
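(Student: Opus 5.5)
The argument works directly with the Dyson expansion \eqref{eq:dyson_expansion_phase_extracted} and bounds the off-diagonal guided-mode transfer coefficients $T_{mm'}(L_{\text{p}})$, $m'\neq m$, order by order. By \eqref{eq:modal_input_output_relation}, the IMI at mode $m$ is $\sum_{m'\neq m}T_{mm'}(L_{\text{p}})s_{m'}$. We take the feeds to be independent with common power $P_{\text{s}}$, so that
\begin{equation*}
P_m^{\text{I}}=P_{\text{s}}\sum_{m'\neq m}|T_{mm'}(L_{\text{p}})|^2 .
\end{equation*}
The proposition therefore follows once we show $|T_{mm'}(L_{\text{p}})|=\mathcal{O}(\epsilon)$ for each $m'\neq m$, since summing $M-1$ terms gives $\mathcal{O}((M-1)\epsilon^2)$.

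\emph{First-order term.} Integrating \eqref{eq:m-n-direct-transition} in closed form gives
\begin{equation*}
|T^{(1)}_{mm'}(x)|=|\kappa_{mm'}|\left|\frac{1-e^{-j(\beta_{m'}-\beta_m)x}}{\beta_{m'}-\beta_m}\right|\le \frac{2|\kappa_{mm'}|}{|\beta_m-\beta_{m'}|}=\epsilon_{mm'}\le\epsilon .
\end{equation*}
This bound holds uniformly in $x$, and in particular at $x=L_{\text{p}}$. This is exactly why $\epsilon_{mm'}$ is defined with the factor $2$.

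\emph{Second-order terms.} There are two paths in \eqref{eq:second-order-CMT}.
\begin{itemize}
\item \emph{Path $m'\to r\to m$.} Do the inner integral first. It produces a factor bounded by $\epsilon_{rm'}$ times a phase difference, and the remaining outer integral is again an oscillatory integral of the same kind. By the non-resonance assumption, the combined phase mismatch $\beta_{m'}-\beta_m$ (for the inner "1" term) is bounded away from zero, and so is $\beta_r-\beta_m$. The term is therefore $\mathcal{O}(\epsilon_{mr}\epsilon_{rm'})=\mathcal{O}(\epsilon^2)$.
\item \emph{PA-mediated path $m'\to\text{PA}\to m$.} Since $m\neq m'$, at least one of $m,m'$ differs from $q$, so one of the two PA couplings is undesired. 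For that hop, integration by parts gives a factor $\epsilon_{\text{p}m}$ or $\epsilon_{\text{p}m'}$, which is $\le \epsilon$. The other hop may be the resonant, strong coupling to $q$, so its factor is only bounded by $\mathcal{O}(1)$ (e.g.\ $|\kappa_{q\text{p}}|L_{\text{p}}$, with the exchange with the PA of order one). The contribution is then $\mathcal{O}(\epsilon)$, still within the claimed rate.
\end{itemize}

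\emph{Higher orders.} I plan to treat the strongly coupled pair $\{q,\text{PA}\}$ exactly rather than perturbatively. Concretely, I would go to an interaction picture with respect to the $2\times 2$ block of $q$ and the PA, whose propagator is unitary and hence bounded. All remaining couplings are then undesired, each carrying a nonresonant phase. Every Dyson term of order $o$ in these weak couplings is bounded, via repeated integration by parts under the higher-order non-resonance assumption, by $C^{o}\epsilon^{o}$. Summing the geometric series for small $\epsilon$ gives $|T_{mm'}|=\mathcal{O}(\epsilon)$.

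\textbf{Main obstacle.} This last step is where I expect the difficulty. Justifying that no small denominators appear at higher order is exactly what the cited non-resonance assumption is invoked for, so I will lean on it rather than prove it. I also need to check that the dressed $q$--PA phases do not create new resonances with the other modes. Combining the three steps yields $P_m^{\text{I}}/P_{\text{s}}\le\sum_{m'\neq m}\mathcal{O}(\epsilon^2)=\mathcal{O}((M-1)\epsilon^2)$, and the "$\ll 1$" consequence is immediate.
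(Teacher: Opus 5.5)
Your proposal is correct and follows the same route as the paper's Appendix A: the closed-form first-order bound $|T^{(1)}_{mm'}|\le\epsilon_{mm'}\le\epsilon$, the observation that every PA-mediated path contains at least one undesired hop bounded by $\epsilon_{\text{p}m}$, an appeal to the higher-order non-resonance assumption to conclude $T_{mm'}=\mathcal{O}(\epsilon)$, and then summing $|T_{mm'}|^2P_{\text{s}}$ over $m'\neq m$ for independent symbols. Your interaction-picture treatment of the resonant $q$--PA pair only makes explicit what the paper asserts directly, namely that a path with $o$ undesired transitions is $\mathcal{O}(\epsilon^o)$.
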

\begin{proof}
    See Appendix A.
\end{proof}
According to Appendix~A, under the approximately IMI-free assumption, we have $T_{mm'}(x)=\mathcal{O}(\epsilon)\ll 1, m'\neq m$ and $T_{\text{p}m}(x)=\mathcal{O}(\epsilon)\ll 1, m\neq q$. Then we can approximate $A_m(x)$ and $B(x)$ as
\begin{equation}
    \label{eq:IMI-free-A}
    A_m(x)\simeq  
    \begin{cases}
s_m, & \text{if } m\neq q,\\
T_{qq}(x)s_q, & \text{if } m=q,
    \end{cases}
\end{equation}
\begin{equation}
    B(x)\simeq T_{\text{p}q}(x)s_q,
\end{equation}
where $T_{qq}(x)$ denotes the effective transition coefficient from guided mode $q$ to itself under the desired coupling with the PA mode, and $T_{\text{p}q}(x)$ is the effective transition coefficient from guided mode $q$ to the PA mode, satisfying $\left|T_{qq}(x)\right|^2+\left|T_{\text{p}q}(x)\right|^2=1$ under the lossless coupling assumption. In other words, the multi-mode coupling between the PA and the waveguide can be equivalently treated as a two-mode coupling between the PA mode and guided mode $q$, as shown in Fig.~(\ref{fig:couple_b}). {Therefore, referring to the derivations in the two-mode coupling case in~\cite{modeling-wzl}, for the special case of $\beta_{\text{p}}=\beta_q$ that enables the maximum free-space radiation power, $A_m(x)$ and $B(x)$ are give by}
\begin{equation}
    \label{eq:A}
    A_m(x)\simeq  
    \begin{cases}
s_m, & \text{if } m\neq q,\\
\operatorname{cos}(\left|\kappa_{\text{p}q}\right|x)s_q, & \text{if } m=q,
    \end{cases}
\end{equation}
\begin{equation}
    B(x)\simeq-j\operatorname{sin}(\left|\kappa_{\text{p}q}\right|x)s_q.
\end{equation}
\subsection{Feasibility Verification of the Proposed Multi-Mode PASS Model via Full-Wave Simulations}
\textbf{Proposition~\ref{prop:IMI-free}} indicates that, coupling coefficients and propagation constants of all waveguide guided modes and the PA mode should be carefully designed for guaranteeing the approximately IMI-free condition. In this regard, the transverse dimensions and the core-cladding refractive indices constitute the key design parameters~\cite{Mohanty2017}. To further verify the feasibility of the identified IMI-free in-waveguide propagation conditions, we conduct full-wave HFSS simulations with practical implementation considerations. Compared to a rotationally symmetric circular dielectric waveguide, the rectangular one exhibits more distinguishable transverse field patterns~\cite{Marcatili1969}, which is therefore adopted here for illustration. It is worthy noting that, the proposed multi-mode waveguide and PA coupling model and the associated IMI analysis are not restricted to rectangular waveguides and can be extended to other waveguide and PA geometries. 
As shown in Fig.~\ref{fig:hfss-simulation}, we illustrate the coupling of the PA quasi-$\mathrm{TE}_{00}$ mode with the waveguide quasi-$\mathrm{TE}_{00}$ mode and quasi-$\mathrm{TE}_{10}$ mode at $28~\mathrm{GHz}$. 
The dielectric waveguide is made of Arlon CLTE-AT\textsuperscript{TM} with a relative permittivity of $3$ and a rectangular cross section of $12~\mathrm{mm}\times1.5~\mathrm{mm}$. The PA is made of Arlon AD600\textsuperscript{TM} with a relative permittivity of $6.15$ and a rectangular cross section of $3.45~\mathrm{mm}\times1.5~\mathrm{mm}$. The distance between the PA and the dielectric waveguide is set to $0.05~\mathrm{mm}$. The waveguide quasi-$\mathrm{TE}_{00}$ and quasi-$\mathrm{TE}_{10}$ modes are with the propagation constants of $688.3~\mathrm{rad/m}$ and $602.8~\mathrm{rad/m}$, respectively, while the PA quasi-$\mathrm{TE}_{00}$ mode is with the propagation constant of $688.3~\mathrm{rad/m}$.

It can be observed that, the PA is located within the strong-field region of the guided quasi-$\mathrm{TE}_{00}$ mode, i.e., around the center of the waveguide, resulting in a substantial modal overlap and hence pronounced power coupling into the PA. Meanwhile, the PA is located close to the field null of the guided quasi-$\mathrm{TE}_{10}$ mode, leading to a significantly reduced modal overlap and negligible power transfer to the PA. The propagation-constant mismatch between the quasi-$\mathrm{TE}_{10}$ mode and the PA mode further suppresses this undesired coupling. The HFSS simulation results yield inter-mode transmission coefficients\footnote{We assign waveguide ports $1$ and $2$ to quasi-$\mathrm{TE}_{00}$ and quasi-$\mathrm{TE}_{10}$ modes, respectively. Accordingly, $S_{12}$ ($S_{21}$) denotes the amplitude
transferred from quasi-$\mathrm{TE}_{10}$ (quasi-$\mathrm{TE}_{00}$)
mode to quasi-$\mathrm{TE}_{00}$ (quasi-$\mathrm{TE}_{10}$) mode.} as $S_{12}=-44.18~\mathrm{dB}$ and $S_{21}=-45.12~\mathrm{dB}$, demonstrating negligible IMI.
These results demonstrate that, with proper waveguide and PA designs, the PA can selectively couple to the intended guided mode while suppressing undesired coupling to the other modes, and preserve approximately IMI-free in-waveguide propagation.

%Note that, since the $(M+1)$-mode coupling occurs in the evanescent field of the waveguide, where $\left\|\mathbf{E}_m\right\|, \forall m$ are much weaker than that inside the core, we can make the weak-perturbation assumption, i.e., $\kappa_{mm'}, \forall m\neq m'$ and $\kappa_{\text{p}m}, \forall m$ are sufficiently small. In this case, the propagation-constant mismatches play a dominant role in determining the multi-mode PASS performance. 

%\begin{enumerate}
    % \item \textbf{Sufficient guided-mode separation}: $\beta_m$ and $\beta_{m'}, \forall m\neq m'$ are sufficiently separated, so that undesired inter-guided-mode coupling is suppressed through    longitudinal phase cancellation. For example, in a rectangular    dielectric waveguide supporting the quasi-$\mathrm{TE}_{00}$ and    quasi-$\mathrm{TE}_{10}$ modes, the waveguide width can be appropriately    reduced to drive the    quasi-$\mathrm{TE}_{10}$ mode closer to cutoff and enlarge its    propagation-constant separation from quasi-$\mathrm{TE}_{00}$, while    ensuring that both modes remain guided~\cite{Mohanty2017}. 
    % \item \textbf{Mode-selective coupling by phase matching}: the fundamental mode of the PA is set to $\beta_{\text{p}}=\beta_{q}$ to enhance desired guided mode-PA coupling. Together with objective {1)}, the undesired guided mode-PA couplings are inherently suppressed. Taking the same example as that in objective 1), two different PA widths can be adopted to phase match     quasi-$\mathrm{TE}_{00}$ and quasi-$\mathrm{TE}_{10}$,     respectively~\cite{Mohanty2017}.
%\end{enumerate}
\begin{figure}[t]

    \centering
    \begin{subfigure}{0.75\linewidth}
        \centering
        \includegraphics[width=1\linewidth]{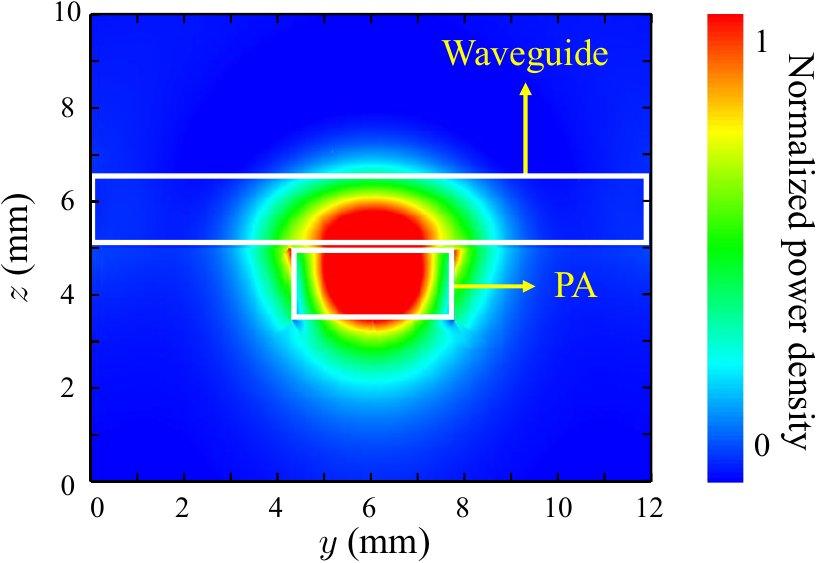}
        \caption{}
        \label{fig:quasi-TE00-PA}
    \end{subfigure}
    \centering
    \begin{subfigure}{0.75\linewidth}
        \centering
        \hspace*{0.001mm}
        \includegraphics[width=1.0\linewidth]{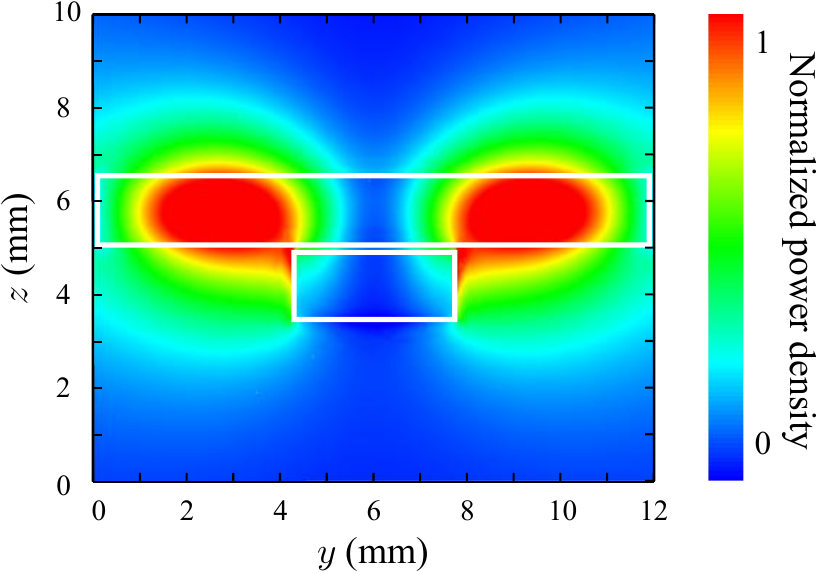}
        \caption{}
        \label{fig:quasi-TE10-PA}
    \end{subfigure}
     \hfill
    \caption{Normalized power density distributions over the waveguide-PA cross section at $x=L_{\text{p}}=6.545$~mm, for the coupling of the PA mode with the waveguide (a) quasi-TE$_{00}$, and (b) quasi-TE$_{10}$ modes, obtained from HFSS full-wave electromagnetic simulations.}
    \label{fig:hfss-simulation}
\end{figure}
   
\subsection{Multi-Mode PASS IMI-Free Signal Model}
Based on the above multi-mode PASS electromagnetic modeling, we now construct the IMI-free signal model from the communication perspective. Under the IMI-free regime, $M$ groups of PAs need to be activated, each radiating the signal carried by its selectively-coupled guided mode. Denote the set of all PAs and that of PAs selectively-coupled with guided mode $m$ by $\mathcal{N}=\left\{1, \dots, N\right\}$ and $\mathcal{N}_m$, respectively, satisfying $\mathcal{N}=\mathcal{N}_1\cup \dots \cup \mathcal{N}_M$ and $\mathcal{N}_{m}\cap \mathcal{N}_{m'}=\emptyset, \forall m\neq m'$. Then, the in-waveguide channel gain $g^{\text{w}}_{nm}$ from the feed point of mode $m$ to PA $n$ is given by
\begin{equation}
    g^{\text{w}}_{nm}=\prod_{ \substack{n'<n\\ n'\in\mathcal{N}_m}}T_{mm}^{n'}(L_{n'})e^{-j\beta_mx^{\text{feed},m}_{n}},
\end{equation}
where $T_{mm}^{n'}(L_{n'})$ denotes the effective transition coefficient from guided mode $m$ to itself after the selective coupling with the $n'$-th PA mode, with $L_{n'}$ representing the length of the $n'$-th PA, and
$x^{\text{feed},m}_{n}$ is the in-waveguide propagation distance of $s_m$ from its dedicated feed point to the coupling region of PA $n$. Moreover, denote the channel gain of PA $n$ for coupling $s_m$ to its radiating end by $g^{\text{p}}_{nm}$, which is given by
\begin{equation}
    g^{\text{p}}_{nm}=
    \begin{cases}
0, & \text{if } n\notin \mathcal{N}_m,\\
T_{nm}\left(L_{n}\right)e^{-j\beta_{n}L_{n}}, &\text{if } n\in\mathcal{N}_m,
    \end{cases}
\end{equation}
where $T_{nm}\left(L_{n}\right)$ is the effective transition coefficient from mode $m$ to the $n$-th PA mode, and $\beta_n$ is the propagation constant of the  $n$-th PA mode. Then, the signal received in the free space is given by 
\begin{equation}
    y_{\text{r}}=\sum_{n\in\mathcal{N}}\sum_{m\in\mathcal{M}}h_{n\text{r}}{g}_{nm}^{\text{p}}g_{nm}^{\text{w}}s_m+n_{\text{r}},
\end{equation}
where $h_{n\text{r}}=\frac{\eta e^{-j\beta_0d_{n\text{r}}}}{
    d_{n\text{r}}}$ denotes the free-space channel gain from PA $n$ to the receiving point, with $\eta$, $\beta_0$ and $d_{n\text{r}}$ representing the reference channel amplitude gain at a distance of $1$~m, the free-space propagation constant, and the distance between PA $n$ and the receiving point, respectively. Moreover, $n_{\text{r}}$ denotes the additive white Gaussian noise (AWGN),

Define $g_{nm}\triangleq g_{nm}^{\text{p}}g_{nm}^{\text{w}}$. If $n\notin \mathcal{N}_m$, we have $g_{nm}=0$. If $n\in\mathcal{N}_m$, for the special case of $\beta_n=\beta_m, \forall n\in\mathcal{N}_m$, we have a more tractable expression of $g_{nm}$ as
\begin{align}
    g_{nm}=-j\prod_{ \substack{n'<n\\ n'\in\mathcal{N}_m}}\operatorname{cos}\left(\phi_{n'm}\right)\operatorname{sin}\left(\phi_{nm}\right)e^{-j\beta_m\tilde{x}^{\text{feed},m}_{n}},
\end{align}
where $\phi_{nm}=\left|\kappa_{nm}\right|L_n$ and $\tilde{x}^{\text{feed},m}_{n}= {x}^{\text{feed},m}_{n}+ L_n$. In practice, the longitude length $L_n$ of each PA can be tailored to realize different power radiation ratios. For example, two simple power radiation models, i.e., equal and proportional power models, are proposed in~\cite{modeling-wzl}. 
%Although the power radiation control can bring in extra design flexibility, it also causes additional implementation complexity. As such, this work considers the fixed power radiation ratio at each PA. 

\section{Application of Multi-Mode PASS to ISAC- 
System Model and Problem Formulation}
{Based on the proposed physical and signal models of IMI-free multi-mode PASS in Section II, we next study ISAC as a representative application scenario to demonstrate the benefits of the additional modal DoFs. For ISAC, the communication and sensing functionalities generally compete for the available transmit resources, thus imposing demanding multiplexing requirements. By allowing multiple guided modes within a single waveguide to carry independent signals, the multi-mode PASS provide additional flexibilities for balancing these two functionalities. Accordingly, this section develops a multi-mode PASS-enabled ISAC framework to validate the merits of the proposed multi-mode PASS model.}
\vspace{-0.3cm}
\subsection{System Model}
\begin{figure}
    \centering
\includegraphics[width=0.9\linewidth]{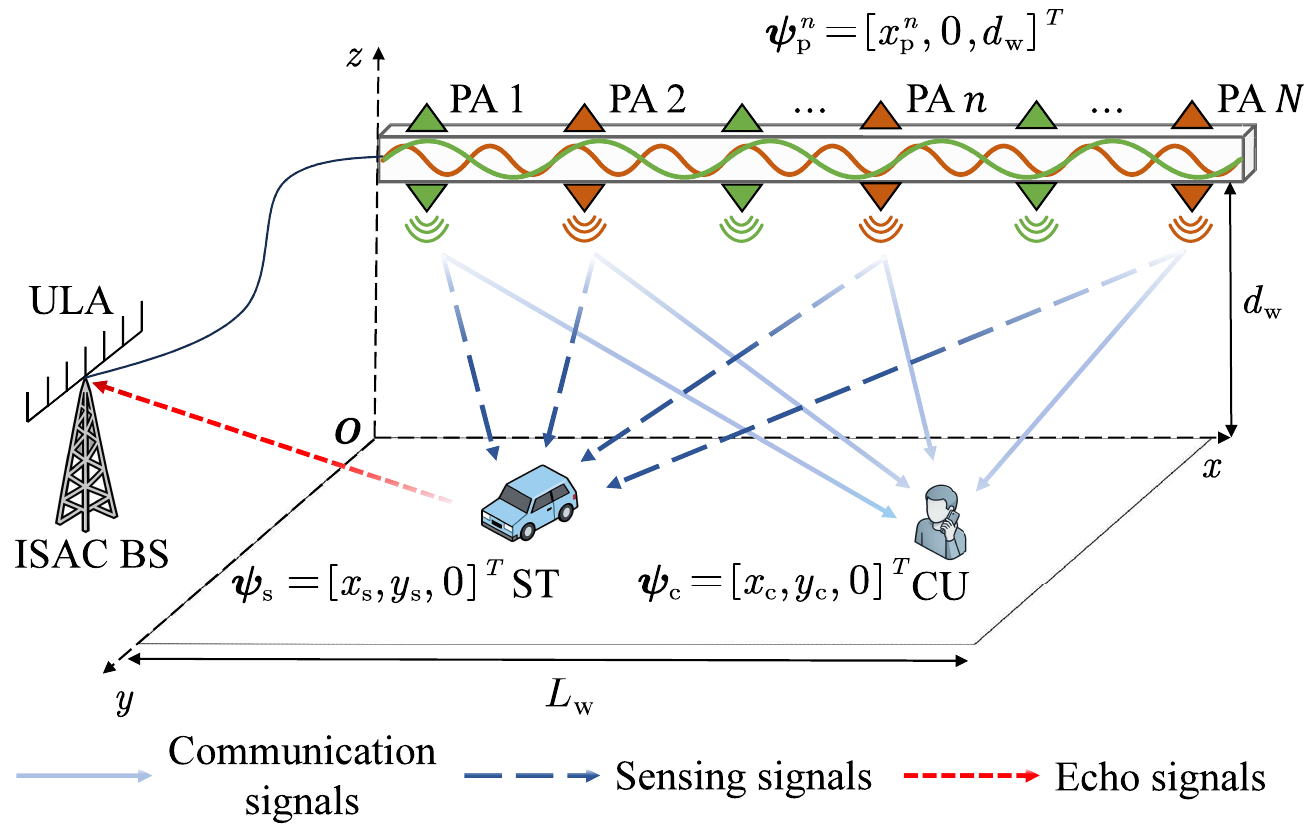}
    \caption{System model of the multi-mode PASS in an ISAC scenario.}
    \label{fig:sys_model}
\end{figure}
We consider the multi-mode PASS-enabled ISAC scenario as shown in Fig.~\ref{fig:sys_model}, where the BS is equipped with a dielectric waveguide for simultaneously transmitting information and sensing signals to a single-antenna CU and a ST, respectively. Considering the minimum spatial multiplexing requirement, we assume that the waveguide supports two guided modes, indexed by $\left\{1,2\right\}$.
The BS employs a ULA with $K$ antennas to receive the reflected echo signals for the target localization, where the antennas are with the half-wavelength spacing. We establish a three-dimensional (3D) Cartesian coordinate system, where the positions of CU and ST are denoted by $\boldsymbol{\psi}_{\text{c}}=\left[x_{\text{c}},y_{\text{c}},0\right]^T$ and $\boldsymbol{\psi}_{\text{s}}=\left[x_{\text{s}},y_{\text{s}},0\right]^T$, respectively.
The transmit waveguide with length $L_{\text{w}}$ is aligned parallel to the $x$-axis with a height of $d_{\text{w}}$. Without loss of generality, we assume that the dedicated feeds for different guided modes are located at the
same longitudinal position of $x_{\text{p}}^{\text{feed}}=0$. {We assume that the sets of PAs selectively-coupled with guided modes $1$ and $2$, i.e., $\mathcal{N}_1$ and $\mathcal{N}_2$, are predetermined.} {The position of PA $n$ within $\mathcal{N}_m$ is denoted by $\boldsymbol{\psi}_{\text{p}}^{m,n}=\left[x^{m,n}_{\text{p}},0,d_{\text{w}}\right]^T$. With respect to the complete PA set $\mathcal{N}=\mathcal{N}_1\cup\mathcal{N}_2$, the $x$-axis coordinates of all PAs are collected into $\mathbf{x}_{\text{p}}=\left[\mathbf{x}_{\text{p}}^1, \mathbf{x}_{\text{p}}^2\right]$, with $\mathbf{x}_{\text{p}}^m=\left[x_{\text{p}}^{m,n}\right]_{n\in\mathcal{N}_m}, \forall m\in\left\{1,2
\right\}$. For notational simplicity, the mode index $m$ is omitted, and $x_{\text{p}}^{n}$ is used to denote the position of PA $n$ hereafter.
In practice, the PA allocation among different modes provides additional design flexibility and introduces a new optimization dimension. However, this problem is beyond the scope of this work and will be investigated in future studies.} 

%$\left[x_{\text{p}}^{m,1},\dots,x_{\text{p}}^{m,N}\right]$, with $m\in\left\{1,2\right\}$ for each $x_{\text{p}}^{m,n}$, determined by the preconfigured PA allocation. 

{We consider two PAs activation cases, i.e., continuous and discrete activation. The continuous activation case is achieved by PAs mounted on sliders and driven by motors~\cite{tutorial-lyw}. The feasible set of $\mathbf{x}_{\text{p}}$ in the continuous activation case is given by 
\begin{equation}
    \label{eq:c-set}
    \mathcal{P}_{\text{c}}=\left\{\mathbf{x}_{
    \text{p}}\in\mathbb{R}^{N}\left|
0<x_{\text{p}}^{n}\leq L_{\text{w}},\left|x_{\text{p}}^{n}-x_{\text{p}}^{n'}\right|\geq\Delta,\forall n\neq n'
\right.
    \right\},
\end{equation}
where $\Delta$ is the minimum spacing among PAs for avoiding PAs coupling.
For the discrete activation case, the electronic control enables selection of pre-deployed PAs at candidate positions, leading to the feasible set of $\mathbf{x}_{\text{p}}$ as
\begin{equation}
    \mathcal{P}_{\text{d}}=\left\{\mathbf{x}_{
    \text{p}}\in\mathbb{R}^{N}\left|x_\text{p}^n\in\left\{p_1,\dots,p_A\right\}, x_\text{p}^n\neq x_\text{p}^{n'}, \forall n\neq n'
    \right.
    \right\},
\end{equation}
where $p_a$ represents a candidate position along the $x$-axis, satisfying $\left|p_{a}-p_{a'}\right|\geq \Delta, \forall a\neq a'$.} 

Before sending the communication and sensing signals to the RF chain, the baseband beamforming is performed to further enhance the multiplexing gain. Specifically, let $T$ denote the number of time-domain snapshots, then the baseband signal $\mathbf{S}\triangleq\left[\mathbf{s}_1,\mathbf{s}_2\right]^T\in\mathbb{C}^{2\times T}$ is given by
\begin{equation}
    \mathbf{S}=\mathbf{w}\mathbf{c}+\mathbf{Z}, 
\end{equation}
where $\mathbf{w}\in\mathbb{C}^{2\times1}$ denotes the baseband beamforming vector for the communication signal $\mathbf{c}\triangleq \left[c_1,\dots,c_T\right]$, and $\mathbf{Z}\triangleq\left[\mathbf{z}_1,\dots,\mathbf{z}_{T}\right]\in\mathbb{C}^{2\times T}$ denotes the dedicated sensing signal. Suppose that $\mathbf{z}_t\overset{\mathrm{i.i.d.}}{\sim} \mathcal{CN}\left(\mathbf{0},\mathbf{R}_{\mathbf{z}}\right)$, with covariance matrix $\mathbf{R}_{\mathbf{z}}=\mathbb{E}[\mathbf{z}_t\mathbf{z}_t^H]$. Given that the communication and sensing signals are uncorrelated, {the covariance matrix of the overall baseband signal is given by $\mathbf{R}_{\mathbf{s}}=\mathbf{w}\mathbf{w}^H+\mathbf{R}_{\mathbf{z}}$.}

Under the condition of IMI-free in-waveguide propagation, the signal received at the CU is given by
\begin{align}
\label{eq: receive communication signal}
\mathbf{y}_{\text{c}}&\left(\mathbf{x}_{\text{p}},\mathbf{w},\mathbf{R}_{\mathbf{z}}\right)=\sum_{m=1}^2\mathbf{h}^H_{\text{c}}\left(\mathbf{x}_{\text{p}}\right)\mathbf{g}_{m}\left(\mathbf{x}_{\text{p}}\right)\mathbf{S}[m,:]+\mathbf{n}_{\text{c}}\notag\\
&=\mathbf{h}_{\text{c}}^H\left(\mathbf{x}_{\text{p}}\right)\mathbf{G}\left(\mathbf{x}_{\text{p}}\right)\mathbf{w}\mathbf{c}+\mathbf{h}_{\text{c}}^H\left(\mathbf{x}_{\text{p}}\right)\mathbf{G}\left(\mathbf{x}_{\text{p}}\right)\mathbf{Z}+\mathbf{n}_{\text{c}},
\end{align}
where $\mathbf{h}_{\text{c}}\left(\mathbf{x}_{\text{p}}\right)\triangleq \left[h_{1\text{c}}, \dots, h_{N\text{c}}\right]^H\in\mathbb{C}^{N\times 1}$ with $h_{n\text{c}}=\frac{\eta e^{-j\beta_0d_{n\text{c}}}}{
    d_{n\text{c}}}$ and $d_{n\text{c}}=\left\|\boldsymbol{\psi}_{\text{c}}-\boldsymbol{\psi}_{\text{p}}^n\right\|$, $\mathbf{g}_m\left(\mathbf{x}_{\text{p}}\right)\triangleq \left[g_{1m},\dots,g_{Nm}\right]^T\in\mathbb{C}^{N\times 1}$, $\mathbf{G}\left(\mathbf{x}_{\text{p}}\right)\triangleq\left[\mathbf{g}_1,\mathbf{g}_2\right]\in\mathbb{C}^{N\times 2}$, and $\mathbf{n}_{\text{c}}\in\mathbb{C}^{1\times T}\sim\mathcal{CN}\left(\mathbf{0},\,\sigma_{\text{c}}^2\mathbf{I}_T\right)$ is the AWGN at the CU , with $\sigma_{\text{c}}^{2}$ representing the noise power. Accordingly, the achievable data rate at the CU is given by  
\begin{align}
\label{eq: communication rate CU}
&R_{\text{c}}\left(\mathbf{x}_{\text{p}},\mathbf{w},\mathbf{R}_{\mathbf{z}}\right)\notag\\
&=\log_2\left(1+\frac{\left|\mathbf{h}_{\text{c}}^H\left(\mathbf{x}_{\text{p}}\right)\mathbf{G}\left(\mathbf{x}_{\text{p}}\right)\mathbf{w}\right|^2}{\mathbf{h}_{\text{c}}^H\left(\mathbf{x}_{\text{p}}\right)\mathbf{G}\left(\mathbf{x}_{\text{p}}\right)\mathbf{R}_{\mathbf{z}}\mathbf{G}^H\left(\mathbf{x}_{\text{p}}\right)\mathbf{h}_{\text{c}}\left(\mathbf{x}_{\text{p}}\right)+\sigma_{\text{c}}^2}\right).
\end{align}

{Assuming that reflections from the CU are effectively suppressed through existing clutter suppression techniques~\cite{Skolnik2001Radar}, the reflected echo signal received at the BS is given by}
\begin{equation}
\label{eq:Y_r}
\mathbf{Y}_{\text{s}}\left(\mathbf{x}_{\text{p}},\mathbf{w},\mathbf{R}_{\mathbf{z}}\right)=\varrho\mathbf{a}\left(\theta_{\text{b}}\right)\mathbf{h}_{\text{s}}^H\left(\mathbf{x}_{\text{p}}\right)\mathbf{G}\left(\mathbf{x}_{\text{p}}\right)\mathbf{S}+\mathbf{N}_{\text{b}},
\end{equation}
where $\varrho$ is the amplitude response, and $\mathbf{a}\left(\theta_{\text{b}}\right)$ is the steering vector of the ULA, with $\theta_{\text{b}}$ denoting the azimuth angle to the ST. Moreover, $\mathbf{h}_{\text{s}}\left(\mathbf{x}_{\text{p}}\right)\triangleq \left[h_{1\text{s}}, \dots, h_{N\text{s}}\right]^H\in\mathbb{C}^{N\times 1}$, with $h_{n\text{s}}=\frac{\eta e^{-j\beta_0d_{n\text{s}}}}{
    d_{n\text{s}}}$ and $d_{n\text{s}}=\left\|\boldsymbol{\psi}_{\text{s}}-\boldsymbol{\psi}_{\text{p}}^n\right\|$, and $\mathbf{N}_{\text{b}}\in\mathbb{C}^{K\times T}$ represents the AWGN with each element following $\mathcal{CN}\left(0,\sigma_{\text{s}}^2\right)$. Focusing on ST localization, we employ the CRB associated with $(x_{\text{s}},y_{\text{s}})$ to quantify the sensing performance, which is given by~\cite{CRB}
\begin{equation}
\label{eq:CRB close form}
\mathrm{CRB}_{\boldsymbol{\psi}_{\text{s}}}\left(\mathbf{x}_{\text{p}},\mathbf{w},\mathbf{R}_{\mathbf{z}}\right)=\left(\mathbf{J}_{11}-\mathbf{J}_{12}\mathbf{J}_{22}^{-1}\mathbf{J}_{12}^T\right)^{-1},
\end{equation}
where $\mathbf{J}_{11}$, $\mathbf{J}_{12}$, $\mathbf{J}_{22}$ $\in\mathbb{R}^{2\times 2}$ are blocks of the fisher information matrix (FIM), for which the detailed expressions are given in Appendix~\ref{app:FIM}. 
The diagonal elements of $\mathrm{CRB}_{\boldsymbol{\psi}_{\text{s}}}\left(\mathbf{x}_{\text{p}},\mathbf{w},\mathbf{R}_{\mathbf{z}}\right)$ provide lower bounds on the mean squared errors of unbiased estimators of $x_{\mathrm{s}}$ and $y_{\mathrm{s}}$, respectively.
\subsection{Problem Formulation}
Our objective is to minimize the trace of the CRB matrix by jointly optimizing the PAs positions and baseband beamforming, subject to the CU's minimum data rate constraint.
The corresponding optimization problem is formulated as follows:
\begin{subequations}
\label{eq:optimization_problem}
\begin{equation}
\label{eq:objective_function}    \min_{\mathbf{x}_{\text{p}}, \mathbf{w},\mathbf{R}_{\mathbf{z}}}\mathrm{tr}\left(\mathrm{CRB}_{\boldsymbol{\psi}_{\text{s}}}\left(\mathbf{x}_{\text{p}},\mathbf{w},\mathbf{R}_{\mathbf{z}}\right)\right),
\end{equation}
% \vspace{-0.4cm}
\begin{equation}
\label{eq:constraint_p}
  {\rm{s.t.}}\ \ \mathrm{tr}\left(\mathbf{w}\mathbf{w}^H+\mathbf{R}_{\mathbf{z}}\right)\leq {P}_{\text{max}},
\end{equation}
\begin{equation}
\label{eq:constraint_qos}
 R_{\text{c}}\left(\mathbf{x}_{\text{p}},\mathbf{w},\mathbf{R}_{\mathbf{z}}\right)\geq R_{\text{min}},
\end{equation}
\begin{equation}
\label{eq:se_constraint0}
\mathbf{R}_{\mathbf{z}}\succeq 0,
\end{equation}
% \vspace{-0.4cm}
%\begin{equation}
%\label{eq:constraint_Delta}
%\textcolor{red}
%\end{equation}
\begin{equation}
\label{eq:constrait_x}
     \mathbf{x}_{\text{p}} \in \mathcal{P}_{\text{c}}/ \mathbf{x}_{\text{p}} \in\mathcal{P}_{\text{d}},
\end{equation}
\end{subequations}
where \eqref{eq:constraint_p} restricts the BS transmit power not to exceed the maximum value of $P_{\text{max}}$. \eqref{eq:constraint_qos} gives the constraint on the minimum communication rate of $R_{\text{min}}$. \eqref{eq:se_constraint0} ensures that $\mathbf{R}_{\mathbf{z}}$ is positive semidefinite. % $\eqref{eq:constraint_Delta}$ is the minimum spacing constraint among PAs to avoid antenna coupling, where $\Delta$ denotes the minimum PA spacing. 
\eqref{eq:constrait_x} gives constraints on feasible PAs positions for continuous/discrete activation cases.

\section{{ Application of Multi-Mode PASS to ISAC- Joint Baseband and Pinching Beamforming}}
In this section, we employ the AO framework to iteratively optimize the baseband and pinching beamforming subproblems. For given $\mathbf{x}_{\text{p}}$, $\left\{\mathbf{w}, \mathbf{R}_{\mathbf{z}}\right\}$ is optimized based on the penalty-based SCA method. For given $\left\{\mathbf{w}, \mathbf{R}_{\mathbf{z}}\right\}$, $\mathbf{x}_{\text{p}}$ is optimized by invoking the PSO and matching theory frameworks, for the continuous and discrete activation schemes, respectively.
\subsection{Baseband Beamforming}
To begin with, $\mathbf{w}$ and $ \mathbf{R}_{\mathbf{z}}$ are optimized with given pinching beamforming $\mathbf{x}_{\text{p}}$. We first introduce the auxiliary matrix {$\mathbf{U}\in\mathbb{R}^{2\times 2}$} to handle the non-convexity of the objective function~\eqref{eq:objective_function_wr}. Specifically, ~\eqref{eq:optimization_problem} can be rewritten as
\begin{subequations}
\label{eq:optimization_problem_wr_2}
\begin{equation}
\label{eq:objective_function_wr_2}    \min_{\mathbf{w},\mathbf{R}_{\mathbf{z}},\mathbf{U}} \mathrm{tr}\left(\mathbf{U}^{-1}\right),
\end{equation}
\begin{equation}
\label{eq:constraint_J}
{\rm{s.t.}} \ \ \begin{bmatrix}\mathbf{J}_{11}-\mathbf{U}&\mathbf{J}_{12}\\\mathbf{J}_{12}^T&\mathbf{J}_{22}\end{bmatrix}\succeq\mathbf{0},
\end{equation}
\begin{equation}
\label{eq:se_constraint2}
\mathbf{U}\succ0,
\end{equation}
\begin{equation}
\eqref{eq:constraint_p}-\eqref{eq:se_constraint0}.
\notag
\end{equation}
\end{subequations}
The problem~\eqref{eq:optimization_problem_wr_2} is still non-convex due to~\eqref{eq:constraint_p} and \eqref{eq:constraint_qos}. To address this, we introduce $\mathbf{W}=\mathbf{w}\mathbf{w}^H$, and the optimization problem~\eqref{eq:optimization_problem_wr_2} is rewritten as  
\begin{subequations}
\label{eq:optimization_problem_wr}
\begin{equation}
\label{eq:objective_function_wr}    \min_{\mathbf{W},\mathbf{R}_{\mathbf{z}},\mathbf{U}} \mathrm{tr}\left(\mathbf{U}^{-1}\right),
\end{equation}
\begin{equation}
\label{eq:1}
    {\rm{s.t.}} \ \ \mathrm{tr}\left(\mathbf{W}+\mathbf{R}_{\mathbf{z}}\right)\leq {P}_{\text{max}},
\end{equation}
\begin{align}
    \label{eq:convex-qos}
    &\frac{1}{\gamma_{\text{min}}}\times\mathbf{h}_{\text{c}}^H\left(\mathbf{x}_{\text{p}}\right)\mathbf{G}\left(\mathbf{x}_{\text{p}}\right)\mathbf{W}\mathbf{G}^H\left(\mathbf{x}_{\text{p}}\right)\mathbf{h}_{\text{c}}\left(\mathbf{x}_{\text{p}}\right)\notag\\&\geq\mathbf{h}_{\text{c}}^H\left(\mathbf{x}_{\text{p}}\right)\mathbf{G}\left(\mathbf{x}_{\text{p}}\right)\mathbf{R}_{\mathbf{z}}\mathbf{G}^H\left(\mathbf{x}_{\text{p}}\right)\mathbf{h}_{\text{c}}\left(\mathbf{x}_{\text{p}}\right)+\sigma_{\text{c}}^2,
\end{align}
\begin{equation}
\label{eq:constraint_rank}
\mathbf{W}\succeq 0,
\end{equation}
\begin{equation}
    \label{eq:rank-one}\mathrm{rank}\left(\mathbf{W}\right)=1, 
\end{equation}
\begin{equation}
\eqref{eq:se_constraint0}, \eqref{eq:constraint_J},
\eqref{eq:se_constraint2},
\notag
\end{equation}
\end{subequations}
where $\gamma_{\text{min}}={2^{R_{\text{min}}}-1}$. Then, we adopt the penalty method to address the non-convex~\eqref{eq:rank-one}, which can be rewritten as $\Omega\triangleq\|\mathbf{W}\|_*-\|\mathbf{W}\|_2=0$,
where $\|\mathbf{W}\|_{*}$ and $\|\mathbf{W}\|_{2}$ denote the nuclear norm and spectral norm of the matrix $\mathbf{W}$. By putting $\Omega$ as a penalty term, \eqref{eq:optimization_problem_wr} can be reformulated as
\begin{subequations}
\label{eq:optimization_problem_wr_3}
\begin{equation}
\label{eq:objective_function_wr_3}    \min_{\mathbf{W},\mathbf{R}_{\mathbf{z}},\mathbf{U}} \mathrm{tr}\left(\mathbf{U}^{-1}\right)+\rho \Omega,
\end{equation}
\begin{equation}
{\rm{s.t.}} \ \
\eqref{eq:se_constraint0}, \eqref{eq:constraint_J},
\eqref{eq:se_constraint2}, \eqref{eq:1}-\eqref{eq:constraint_rank},
\end{equation}
\end{subequations}
where $\rho\geq0$ is the penalty factor. To further tackle the non-convexity of $\Omega$, we apply the SCA method that adopts the first-order Taylor expansion to obtain the convex upper bound:
\begin{equation}
    \label{penalty_term_sca}
\|\mathbf{W}\|_{*}-\|\mathbf{W}\|_{2}\leq \|\mathbf{W}\|_{*}-\bar{\mathbf{W}}^{\left(j\right)}\triangleq \Omega^{\left(j\right)},
\end{equation}
where $\bar{\mathbf{W}}^{\left(j\right)}\triangleq \|\mathbf{W}^{\left(j\right)}\|_{2}+\mathrm{tr}\left(\chi\left(\mathbf{W}^{\left(j\right)}\right)\chi\left(\mathbf{W}^{\left(j\right)}\right)^{H}\left(\mathbf{W}-\mathbf{W}^{\left(j\right)}\right)\right)$ with $\chi(\mathbf{W}^{\left(j\right)})$
representing the eigenvector of $\mathbf{W}^{\left(j\right)}$
associated with the largest eigenvalue. By substituting $\Omega^{(j)}$ into~\eqref{eq:objective_function_wr_3} in each SCA iteration, the surrogate problem can be solved by existing convex optimization tools such as CVX~\cite{cvx}. 
\subsection{Pinching Beamforming with Continuous Activation}
With given baseband beamforming, the optimization problem for the continuous PA activation is given by:
\begin{subequations}
\label{eq:optimization_problem_x}
\begin{equation}
\label{eq:objective_function_x}  \min_{\mathbf{x}_{\text{p}}} \mathrm{tr}\left(\mathrm{CRB}_{\boldsymbol{\psi}_{\text{s}}}\left(\mathbf{x}_{\text{p}}\right)\right),
\end{equation}
\begin{equation}
\label{eq:x_constraint_continuous}  
{\rm{s.t.}} \ \ \mathbf{x}_{\text{p}} \in \mathcal{P}_{\text{c}},
\end{equation}
\begin{equation}
 \eqref{eq:constraint_qos}.
 \notag
\end{equation}
\end{subequations}
Since problem~\eqref{eq:optimization_problem_x} is highly non-convex and exhibits multiple local optima with respect to $\mathbf{x}_{\text{p}}$, obtaining a high-quality solution using conventional gradient-based methods is challenging. To address this issue, we employ the PSO algorithm to efficiently explore the PAs positions space and search for a favorable solution through the collaborative evolution of multiple particles~\cite{Clercparticle2002}. Specifically, we first define a particle swarm consisting of $J$ particles, where each particle represents the positions of all PAs. The initial position of the $j$-th particle is denoted by 
\begin{equation}
\label{eq:initial position}
\mathbf{x}_{\text{p},j}^{\left(0\right)}=\left[\left(x_{\text{p},j}^1\right)^{\left(0\right)},\left(x_{\text{p},j}^2\right)^{\left(0\right)},\cdots, \left(x_{\text{p},j}^N\right)^{\left(0\right)}\right]^T.
\end{equation}
The initial velocity of the $j$-th particle is given by
\begin{equation}
\label{eq:initial position}
\mathbf{v}_{j}^{\left(0\right)}=\left[v_{j,1}^{\left(0\right)},v_{j,2}^{\left(0\right)},\cdots, v_{j,N}^{\left(0\right)}\right]^T.
\end{equation}
Subsequently, we set the initial local best position of the $j$-th particle as $\mathbf{x}_{\text{p},j}^*=\mathbf{x}_{\text{p},j}^{\left(0\right)}$. Then, the velocity and position of each particle in the $i$-th iteration are updated according to the following equations:
\begin{equation}
\begin{aligned}
\label{eq:pso-v-iter}
\mathbf{v}_{j}^{(i)}=&\omega\mathbf{v}_{j}^{(i-1)}+c_{1}\tau_{1}\left(\mathbf{x}_{\text{p},j}^*-\mathbf{x}_{\text{p},j}^{(i-1)}\right)\\&+c_{2}\tau_{2}\left(\mathbf{x}_{\text{p}}^*-\mathbf{x}_{\text{p},j}^{(i-1)}\right),
\end{aligned}
\end{equation}
\begin{equation}
\label{eq:pso-x-iter}
\mathbf{x}_{\text{p},j}^{\left(i\right)}[n]=B\left(\mathbf{x}_{\text{p},j}^{(i-1)}[n]+\mathbf{v}_j^{(i)}[n]\right),
\end{equation}
where $c_1$ and $c_2$ denote the cognitive and social learning factors, respectively, $\omega$ denotes the inertia weight, {$\mathbf{x}_{\text{p}}^*$ represents the global best position among all particles,} $\tau_1,\tau_2\sim\mathcal{U}[0,1]$ are random parameters, and $B(\cdot)$ represents the projection function, {defined as $B\left(x\right) = \max\left(0, \min\left(x, L_{\text{w}}\right)\right)$.} To achieve a balance between exploration and exploitation, the inertia weight $\omega$ is continuously decreased as follows:
\begin{equation}
\label{eq:inertia weight}
\omega=\omega_{\text{max}}-\frac{(\omega_{\text{max}}-\omega_{\text{min}})i}{I},
\end{equation}
where $\omega_{\text{min}}$ and $\omega_{\text{max}}$ denote the minimum and maximum value of $\omega$, respectively, and $I$ is the maximum number of iterations. To tackle constraints~\eqref{eq:constraint_qos} and~\eqref{eq:x_constraint_continuous}, we define a fitness function by putting the corresponding penalty terms into the original objective function~\eqref{eq:objective_function_x} as follows:
\begin{equation}
\label{eq:fitness function}
\begin{aligned}F\left(\mathbf{x}_{\text{p},j}\right)&=\mathrm{tr}\left(\mathrm{CRB}_{\boldsymbol{\psi}_{\text{s}}}\left(\mathbf{x}_{\text{p},j}\right)\right)\\&+\mu_{1}\bar{R}\left(\mathbf{x}_{\text{p},j}\right)+\mu_{2}D\left(\mathbf{x}_{\text{p},j}\right),\end{aligned}
\end{equation}
where $\mu_1$ and $\mu_2$ are the positive penalty parameters. $\bar{R}\left(\mathbf{x}_{\text{p},j}\right)$ and $D\left(\mathbf{x}_{\text{p},j}\right)$ are penalty terms that evaluate the violation of the constraint~\eqref{eq:constraint_qos} and~\eqref{eq:x_constraint_continuous}, which are defined as follows:
\begin{equation}
    \label{eq:penalty_2}
\bar{R}\left(\mathbf{x}_{\text{p},j}\right)=\max\left\{R_{\text{min}}-R_{\text{c}}\left(\mathbf{x}_{\text{p},j}\right),0\right\},
\end{equation}
\begin{equation}
\begin{aligned}
\label{eq:penalty_1}
D\left(\mathbf{x}_{\text{p},j}\right)&=\sum_{n=1}^{N}\sum_{n'=n+1}^N\max\left\{\Delta-\left|x_{\text{p},j}^{n}-x_{\text{p},j}^{n'}\right|,0\right\}.
\end{aligned}
\end{equation}
By setting the penalty factors $\mu_1$ and $\mu_2$ as sufficiently large values, the constraints~\eqref{eq:constraint_qos} and~\eqref{eq:x_constraint_continuous} can be satisfied by driving the particle to the feasible region. For each particle, its local and global best positions are updated by the current position when the obtained fitness value gets smaller. The details of the PSO algorithm are given in $\textbf{Algorithm~\ref{alg:pso}}$.
\begin{algorithm}[t]
\caption{PSO Algorithm for Solving Problem~\eqref{eq:optimization_problem_x}}
\label{alg:pso}
\begin{algorithmic}[1]
%\STATE \textbf{Input:} 
%\STATE \textbf{Output:} $\tilde{\mathbf{x}^{\text{p}}}$.
\STATE Initialize $\mathbf{x}_{\text{p},j}^{\left(0\right)}$ and $\mathbf{v}_{j}^{\left(0\right)}$ for each particle.
\STATE Calculate the fitness value for each particle.
\STATE Set the local best position $\mathbf{x}_{\text{p},j}^* = \mathbf{x}_{\text{p},j}^{\left(0\right)}$ and the global best position $\mathbf{x}_{\text{p}}^* = \mathop{\arg\min} \left\{F\left(\mathbf{x}_{\text{p},1}^{\left(0\right)}\right), \dots, F\left(\mathbf{x}_{\text{p},J}^{\left(0\right)}\right)\right\}$.
\FOR{$i = 1$ to $I$}
    \STATE Update the inertia weight $\omega$ in~\eqref{eq:inertia weight}.
    \FOR{$j = 1$ to $J$}
        \STATE Calculate the velocity and position of the $j$-th particle with~\eqref{eq:pso-v-iter} and~\eqref{eq:pso-x-iter}, respectively.
        \STATE Calculate the fitness value of the $j$-th particle with~\eqref{eq:fitness function}.
        \IF{$F\left(\mathbf{x}_{\text{p},j}^{\left(i\right)}\right) < F\left(\mathbf{x}_{\text{p},j}^*\right)$}
            \STATE Update $\mathbf{x}_{\text{p},j}^* = \mathbf{x}_{\text{p},j}^{\left(i\right)}$.
        \ENDIF
        \IF{$F\left(\mathbf{x}_{\text{p},j}^{\left(i\right)}\right) < F\left(\mathbf{x}_{\text{p}}^*\right)$}
            \STATE Update $\mathbf{x}_{\text{p}}^*=\mathbf{x}_{\text{p},j}^{(i)}$.
        \ENDIF
    \ENDFOR
\ENDFOR
\RETURN $\mathbf{x}_{\text{p}}^*$.
\end{algorithmic}
\end{algorithm}

\subsection{Pinching Beamforming with Discrete Activation} 
For the discrete activation case, the $\mathbf{x}_{\text{p}}$ optimization problem can be formulated as 
\begin{subequations}
\label{eq:optimization_problem_x_d}
\begin{equation}
\label{eq:objective_function_x_d}  \min_{\mathbf{x}_{\text{p}}} \mathrm{tr}\left(\mathrm{CRB}_{\boldsymbol{\psi}_{\text{s}}}\left(\mathbf{x}_{\text{p}}\right)\right),
\end{equation}
\begin{equation}
\label{eq:x_constraints_d_1}  
{\rm{s.t.}}\ \ \mathbf{x}_{\text{p}}\in\mathcal{P}_{\text{d}},
\end{equation}
\begin{equation}
\label{eq:x_constraints}  
  \eqref{eq:constraint_qos}.
 \notag
\end{equation}
\end{subequations}
For solving the integer programming problem~\eqref{eq:optimization_problem_x_d}, we invoke the matching theory~\cite{matching1} to explore the low-complexity and near-optimal solution. Specifically, under the constraint~\eqref{eq:x_constraints_d_1}, the relationship between the PA set $\mathcal{N}$ and the candidate position set $\mathcal{P}$ can be regarded as a \textit{two-sided one-to-one} matching problem.
\begin{definition}
In the proposed two-sided one-to-one matching model, $\varphi$ denotes a mapping $\left(\mathcal{N}\cup\mathcal{P}\right) \rightarrow \left(\mathcal{N}\cup\mathcal{P}\right)$ , satisfying
\begin{enumerate}
    \item $\varphi(n)\in \mathcal{P}, |\varphi(n)|=1, \forall n \in \mathcal{N}$;
    \item $\varphi(p_a)\in \mathcal{N}, |\varphi(p_n)|=1, \forall p_a \in \mathcal{P}$;
    \item $\varphi(n)=p_a \Leftrightarrow \varphi(p_a)=n$.
\end{enumerate}
\end{definition}
\noindent

Since our objective is to minimize the trace of CRB by properly matching PAs with the candidate positions, all matching participants share the same global utility
$U\left(\varphi\right)=-\mathrm{tr}\left(\mathrm{CRB}_{\boldsymbol{\psi}_{\text{s}}}\left(\varphi\right)\right)$,
where $\mathrm{CRB}_{\boldsymbol{\psi}_{\text{s}}}\left(\varphi\right)$ is the achievable CRB under matching state $\varphi$. Accordingly, a common ``preference" relation over
the feasible matching states is induced by the global utility, i.e.,
\begin{equation}
    \varphi' \succ \varphi
    \quad\Longleftrightarrow\quad
    U(\varphi')>U(\varphi).
\end{equation} 
The pair $(n,n')$ is referred
to as a \textit{swap-blocking pair} under matching $\varphi$ if
$U\left(\varphi_{n\leftrightarrow n'}\right)>U(\varphi)$, where $\varphi_{n\leftrightarrow n'}=\left\{\varphi\setminus\left\{(n,\varphi(n)),(n',\varphi(n'))\right\}\right\}\cup \left\{(n,\varphi(n')),(n',\varphi(n))\right\}$. Note that, to incorporate vacant candidate positions into the swap process, we introduce dummy PAs, denoted by the set $\mathcal{N}_{\text{d}}\triangleq\left\{d_1,\dots,d_B\right\}$,  such that each unoccupied position is temporarily matched with a dummy PA. Dummy PAs are assumed
to have no physical functionality and make no contribution to the
system utility. Suppose that PA $n$ is currently matched with position
$p_a$, whereas position $p_{a'}$ is occupied by a dummy PA $d_b$,
The relocation of PA $n$ from $p_a$ to $p_{a'}$ is then represented
by the swap matching $\varphi_{n\leftrightarrow d_b}$. A matching is swap-stable if it admits no swap-blocking pair. 

To obtain the swap-stable matching, we develop a swap-based matching algorithm as shown in \textbf{Algorithm~\ref{alg:matching}}. For the initialization, PAs are randomly matched to candidate positions on the condition of satisfying the minimum communication rate. Subsequently, each PA searches for all other PAs (including dummy PAs) to check whether there exists swap-blocking pairs. Since a feasible swap is accepted only if it strictly
increases the common utility, the algorithm
cannot revisit any previously visited matching state, which guarantees that \textbf{Algorithm~\ref{alg:matching}} converges to a swap-stable matching after a finite number of swaps. 
\begin{algorithm}[!tp]
\caption{Matching Algorithm for Solving Problem~\eqref{eq:optimization_problem_x_d}}
\label{alg:matching}
\begin{algorithmic}[1]
\STATE Randomly match each PA $n$ with a candidate position in $\mathcal{P}$ on the condition of satisfying constraint \eqref{eq:constraint_qos}.
\STATE Match each unoccupied position with $d_b\in\mathcal{N}_{\text{d}}$.
    \REPEAT
    \FOR{$n \in \mathcal{N}$}
    \FOR{$n'\in \left\{\mathcal{N}\setminus \left\{n\right\}\right\}\cup\mathcal{N}_d$}
        \IF{$\left(n,n'\right)$ is a blocking pair and \eqref{eq:constraint_qos} is satisfied under $\varphi_{n\leftrightarrow n'}$}
            \STATE Update the matching state $\varphi = \varphi_{n\leftrightarrow n'}$.
        \ENDIF
        \ENDFOR
    \ENDFOR
    \UNTIL{$\nexists (n,n')$ blocks $\varphi$.}
\RETURN $\varphi$.
\end{algorithmic}
\end{algorithm}
\vspace{-0.3cm}
\subsection{Overall Algorithm}
{The AO-based algorithm for solving the original problem~\eqref{eq:optimization_problem} is given in~\textbf{Algorithm~\ref{alg:AO}}, where the baseband and pinching beamforming subproblems are solved iteratively until the decrement of CRB gets below the predefined threshold. Since the trace of CRB is lower-bounded and non-increasing for optimizing $\left\{\mathbf{W},\mathbf{R}_{\mathbf{z}}\right\}$ and $\mathbf{x}_{\text{p}}$,  \textbf{Algorithm~\ref{alg:AO}} is guaranteed to converge within limited number of iterations. The computational complexity of solving the beamforming subproblem is given by $O(I_{\text{out}}I_{\text{in}} 2^{3.5})$, where $I_{\text{out}}$ and $I_{\text{in}}$ denote the number of iterations required for updating the penalty parameter and the SCA convergence, respectively. For the continuous activation case, the computational complexity of the PSO-based pinching beamforming algorithm is given by $O(IJN)$. For the discrete activation case, the matching-based pinching beamforming algorithm yields the complexity of $O(I_{\text{match}}NA)$, where $I_{\text{match}}$ is the number of the outermost cycle in the matching algorithm.}
\begin{algorithm}[tp]
    \renewcommand{\algorithmicrequire}{\textbf{Input:}}
    \renewcommand{\algorithmicensure}{\textbf{Output:}}
  \caption{Proposed AO-based algorithm for solving problem $\eqref{eq:optimization_problem}$}
  \begin{algorithmic}[1]
     \label{alg:AO}
      \STATE  Initialize  $\mathbf{x}_{\text{p}}^{\left(0\right)}$, $\mathbf{w}^{\left(0\right)}$, and $\mathbf{R}_{\mathbf{z}}^{\left(0\right)}$.
      \STATE Set iteration index $q = 0$.
      
      \REPEAT
         \STATE Calculate $\mathbf{w}^{\left(q+1\right)}$ and $\mathbf{R}_{\mathbf{z}}^{\left(q+1\right)}$ with given $\mathbf{x}_{\text{p}}^{(q)}$.
         \STATE Calculate $\mathbf{x}_{\text{p}}^{(q+1)}$ with given $\mathbf{w}^{(q+1)}$ and $\mathbf{R}_{\mathbf{z}}^{(q+1)}$.   
      \STATE $q =q+1$.
      \UNTIL the decrement of $\mathrm{CRB}$ is below the threshold $\epsilon$.
      \RETURN $\mathbf{x}_{\text{p}}, \mathbf{w}, \mathbf{R}_{\mathbf{z}}$.
  \end{algorithmic}
\end{algorithm}

\section{Simulation Results}
In this section, numerical results are provided to demonstrate the benefits of multi-mode PASS in enhancing ISAC performance and to evaluate the effectiveness of the proposed algorithms.
\subsection{Simulation Setup}
The BS is equipped with a rectangular dielectric waveguide with the length $L_{\text{w}} = 10$~m deployed at the height of $d_{\text{w}} = 3$~m. An equal number of PAs is allocated to the two considered modes, i.e., $\left|\mathcal{N}_1\right|=\left|\mathcal{N}_2\right|=\frac{N}{2}$. 
The minimum spacing $\Delta$ among PAs is set to $\frac{\lambda}{2}$, with $\lambda$ denoting the free-space wavelength at $28$~GHz. The waveguide and PAs parameters setup are consistent with the example given in Fig.~\ref{fig:hfss-simulation}. All PAs are with the longitudinal length of $6.545$~mm.
%The length and width of the rectangular cross-section of the waveguide are set to $12$~mm and $1.5$~mm. The refractive index of the cladding medium is $n_{\text{clad}}=1$, and the relative permittivity of the waveguide core is $\epsilon_r=4$ with $n_{\text{core}}=2$. At the carrier frequency $f_{\text{c}}=28$~GHz, the propagation constants for the two orthogonal guided modes in the waveguide, i.e., the quasi-$\mathrm{TE}_{00}$ and quasi-$\mathrm{TE}_{10}$ modes, are $\beta_1 = 688.3$~rad/m and $\beta_2 = 602.8$~rad/m, respectively.
The ULA consists of $K = 4$ antenna elements, where the antenna spacing is set as $\frac{\lambda}{2}$. The CU and the ST are randomly distributed within a rectangular area $\mathcal{S}=\{\left[x,y,0\right],x\in(0,L_{\text{w}}),y\in(-10,10)~\text{m}\}$. The reference channel amplitude gain at a distance of $1$~m is $\frac{\lambda}{4\pi}$. The number of time-domain snapshots is $T = 256$. The maximum BS transmit power is $P_{\text{max}}=20$~dBm, and the noise powers at the CU and ULA are set as $\sigma_{\text{c}}^2 = \sigma_{\text{s}}^2 = -90$~dBm. The minimum data rate for the CU is $R_{\text{min}} = 2$~bps/Hz. For the continuous PAs activation case, the penalty-based PSO algorithm exploits $J=200$ particles with the maximum number of iterations $I=100$. The cognitive and social learning coefficients are set as $c_1=1.8$ and $c_2=0.8$, respectively. The minimum and the maximum inertia weights are set as $\omega_{\text{min}}=0.1$ and $\omega_{\text{max}}=0.9$, respectively. For the discrete PAs activation case, we set the number of available PAs positions as $A=50$. Unless otherwise specified, the following simulations adopt the same parameter settings as those described above. All simulation results are averaged over 50 independent realizations. The following benchmark schemes are considered for performance comparison.

\begin{itemize}

\item \textbf{Single-mode PASS, continuous:}
In this case, only one guided mode is excited in the waveguide, indicating that only one data stream can be transmitted. As such, the BS utilizes the information signal for sensing, without introducing an additional dedicated sensing signal. The resultant pinching beamforming is optimized under the continuous activation assumption, by applying~\textbf{Algorithm~\ref{alg:pso}}.

\item \textbf{Conventional MIMO:}
In this case, the BS adopts a ULA for signal transmission. For a fair comparison with the multi-mode PASS, the ULA adopts the sub-connected hybrid beamforming structure, where $N$ antennas are equally partitioned into $2$ subarrays, each connected to one RF chain. The ULA is centered at $\left(0,0,3\right)$~m, and its antenna elements are deployed along the $y$-axis with a uniform spacing of $\frac{\lambda}{2}$. 
The hybrid beamforming is optimized with the algorithms proposed in ~\cite{Massive-mimo}.
\end{itemize}
 
\subsection{RCRB Versus the BS Transmit Power}
\begin{figure}
    \centering    \includegraphics[width=0.85\linewidth]{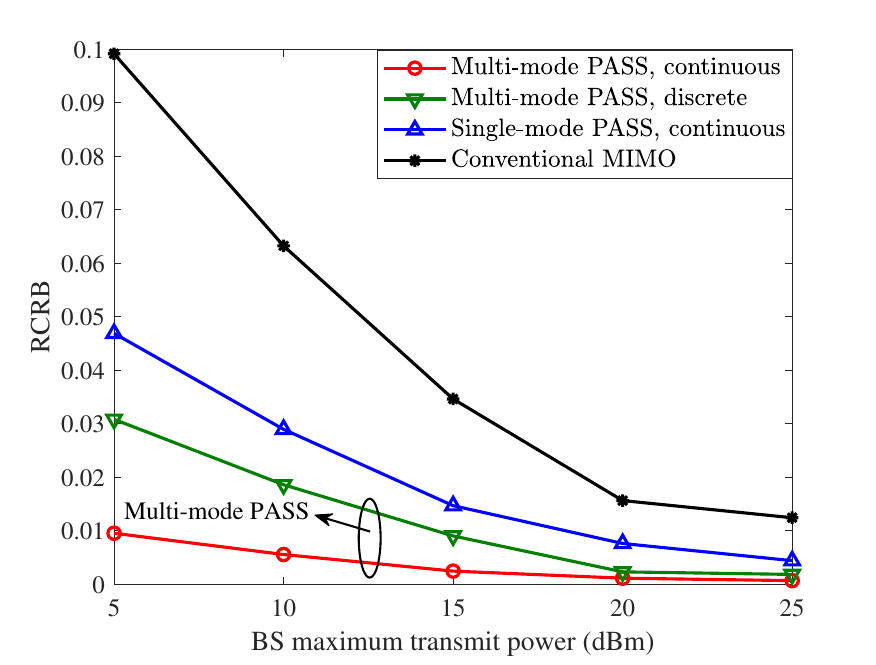}
    \caption{RCRB versus BS maximum transmit power $P_{\text{max}}$, with $N=6$.}
    \label{fig:RCRB-P}
\end{figure}
Fig.~\ref{fig:RCRB-P} depicts the root CRB (RCRB) versus maximum BS transmit power $P_{\text{max}}$. As observed, the multi-mode PASS scheme under the continuous activation case achieves the lowest RCRB among all considered schemes. This is because, on the one hand, PASS can significantly reduce the path loss by flexibly moving PAs towards the CU and the ST. On the other hand, multiple guided modes within the waveguide can provide additional modal DoFs to enhance sensing and communication performance simultaneously. The discrete activation case exhibits slightly degraded performance due to the limited deployment flexibility caused by position discretization. It is also observed that, the single-mode PASS exhibits a significant performance degradation compared to the proposed multi-mode PASS schemes. This is expected, as the single-mode PASS solely utilizes the communication signal waveform for sensing the target, resulting in a rank-one transmit covariance matrix and limited spatial DoFs for jointly shaping the communication and sensing functionalities. This performance degradation get more pronounced under smaller $P_{\text{max}}$, where the single-mode PASS have to consume most of the power to guarantee the minimum data rate constraint of the CU. Furthermore, the conventional MIMO scheme achieves the highest RCRB. Due to the fixed-position antenna array, the MIMO structure lacks the flexible channel reconfiguration capability, which heavily restricts the spatial DoFs compared to the PASS architecture.

\subsection{RCRB Versus the Number of PAs}
\begin{figure}
        \centering    \includegraphics[width=0.85\linewidth]{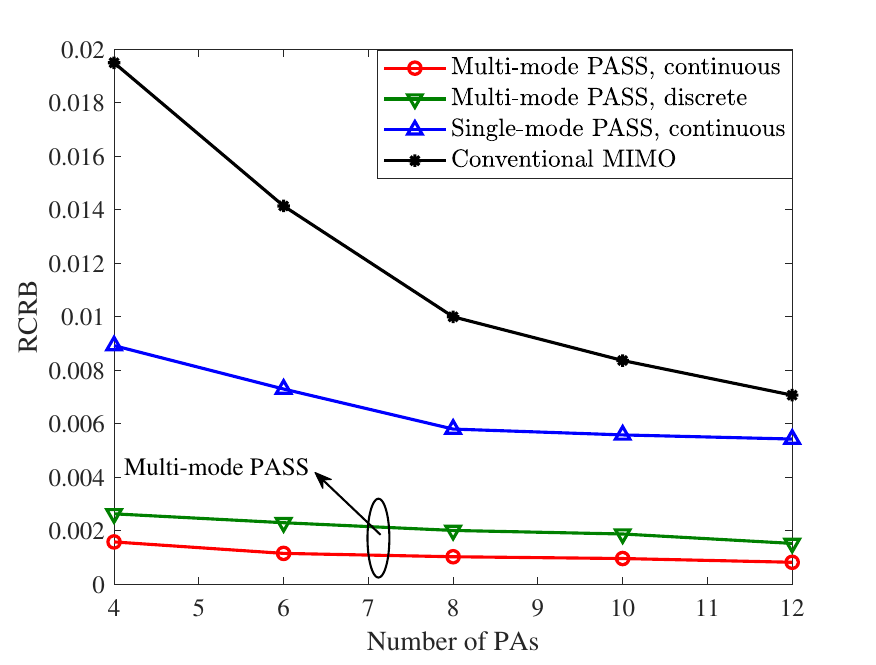}
    \caption{RCRB versus number of PAs $N$.}
    \label{fig:RCRB-N}
\end{figure}
Fig.~\ref{fig:RCRB-N} illustrates the RCRB versus the number of PAs. It can be observed that the RCRB decreases with the increment of $N$ for all considered schemes. For the PASS schemes, additional PAs provide more spatial DoFs for pinching beamforming. For the conventional MIMO scheme, although the sensing performance improves owing to the increased array gain, it remains inferior to that of the PASS schemes due to the limited channel reconfiguration capability. Notably, the RCRB reduction of the single-mode PASS becomes less pronounced as $N$ increases. This is because, although additional PAs provide more spatial DoFs for pinching beamforming, the single-mode architecture still suffer from the restricted multiplexing gain.  Consequently, given that the minimum data-rate requirement of the CU needs to be satisfied, further increasing the number of PAs yields only marginal sensing gains.

\subsection{RCRB Versus the Waveguide Length}
\begin{figure}
        \centering    \includegraphics[width=0.85\linewidth]{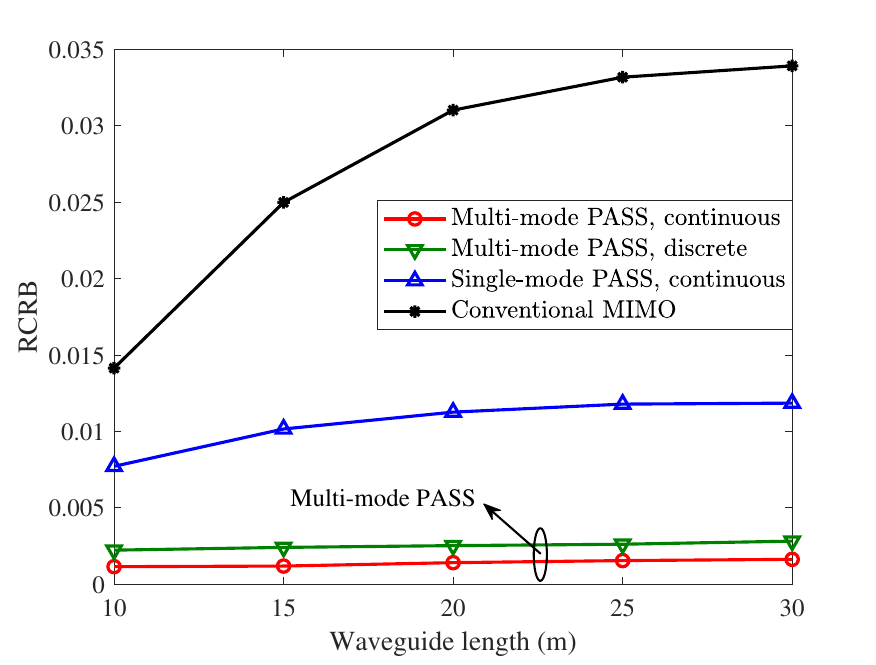}
    \caption{RCRB versus waveguide length $L_{\text{w}}$, with $N=6$.}
    \label{fig:RCRB-L}
\end{figure}
{In Fig.~\ref{fig:RCRB-L}, we evaluate the impact of the waveguide length $L_{\text{w}}$ on the ISAC performance in terms of the RCRB.} It is first observed that, the conventional MIMO scheme exhibits a significantly worse RCRB with the increment of $L_{\text{w}}$. This is because, a larger $L$ extends the CU and ST distribution area, leading to higher probability that the CU and ST farther away from the antenna array, which can intensify the path loss. Moreover, the RCRB achieved by the single-mode PASS increases with the increment of $L$. This is because, as the CU and ST become more spatially separated, the rank-one transmit covariance makes it difficult to simultaneously establish favorable beamforming gains toward the CU and ST. Consequently, more spatial resources must be devoted to satisfying the communication rate requirement, which further compromises the sensing performance and leads to a larger CRB. In contrast, the proposed multi-mode PASS schemes maintain an extremely low RCRB level, which demonstrates the effectiveness of multi-mode PASS for accommodating both the communication and sensing performances.

\subsection{RCRB Versus the Minimum Communication Rate}
\begin{figure}
        \centering    \includegraphics[width=0.85\linewidth]{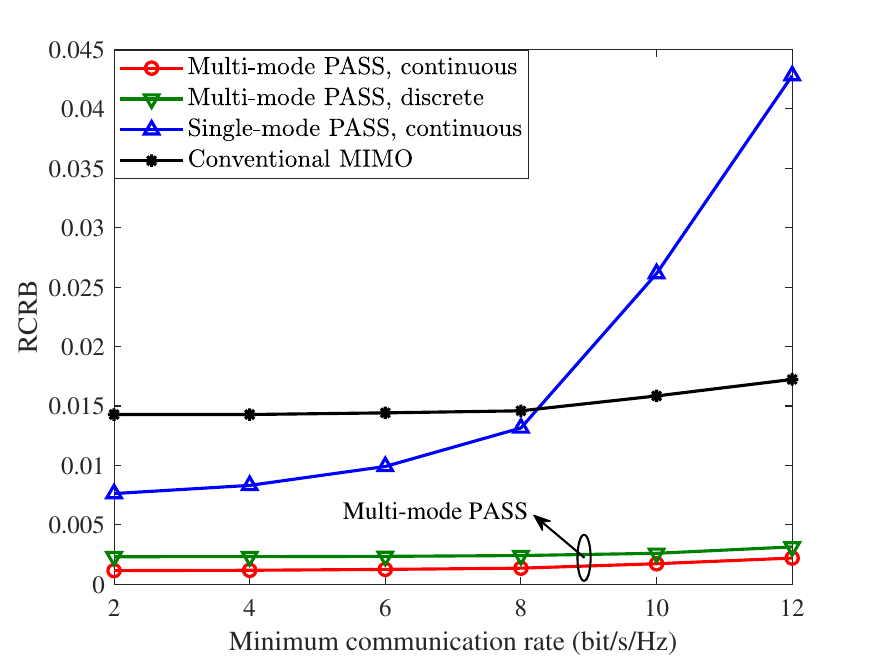}
    \caption{RCRB versus minimum communication rate $R_{\text{min}}$, with $N=6$.}
    \label{fig:RCRB-tradeoff}
\end{figure}
In Fig.~\ref{fig:RCRB-tradeoff}, we investigate the ISAC performance under different minimum communication rate requirements, i.e., $R_{\text{min}}$. It can be observed that, the sensing RCRBs for all schemes exhibit a monotonically increasing trend as $R_{\text{min}}$ increases. This is expected, as a larger $R_{\text{min}}$ forces the BS to allocate more spatial resources toward the CU to satisfy the minimum data rate constraint, which leads to a higher sensing RCRB. Meanwhile, the proposed multi-mode PASS schemes consistently outperform the benchmark schemes and maintain low RCRB values over the range of $R_{\text{min}}$. This is attributed to the fact that the multi-mode transmission provides additional modal multiplexing gains, allowing the system to satisfy the increasingly stringent minimum data requirement while preserving favorable sensing performance. In contrast, the sensing performance of the single-mode PASS deteriorates significantly as $R_{\text{min}}$ increases. The reason is that, the pinching beamformer should provide a higher beamforming gain toward the CU with the increment of $R_{\text{min}}$, leaving less flexibility to shape the transmitted signal toward the ST. Consequently, the communication-sensing beamforming tradeoff becomes more pronounced.
It is noted that the RCRB achieved by the single-mode PASS scheme even exceeds that of the conventional MIMO scheme when $R_{\text{min}}$ is sufficiently large.

\subsection{RCRB Versus the Number of PAs Candidate Positions}
\begin{figure}
        \centering    \includegraphics[width=0.85\linewidth]{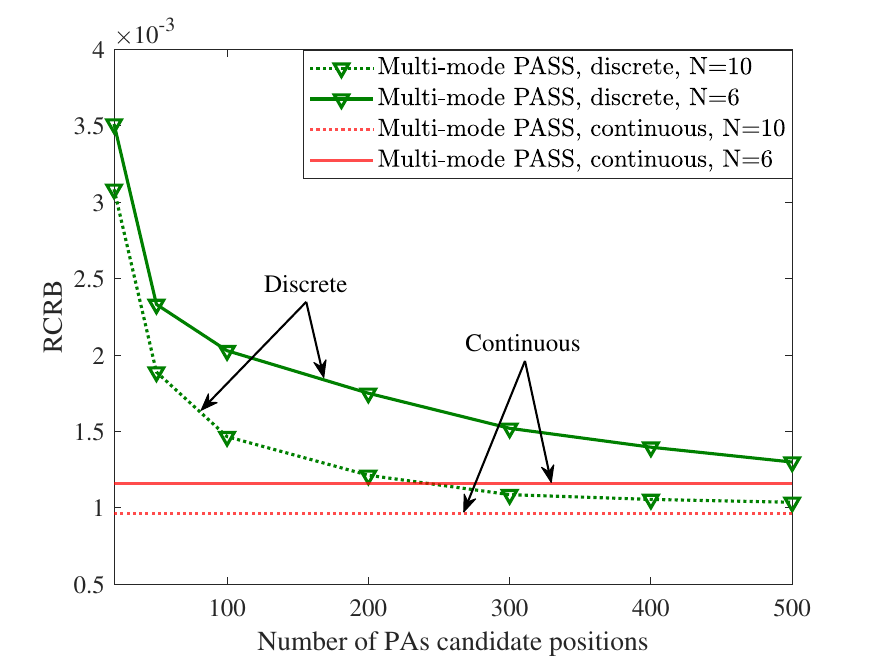}
    \caption{RCRB versus number of PAs candidate positions $A$.}
    \label{fig:RCRB-A}
\end{figure}

{In Fig.~\ref{fig:RCRB-A}, we investigate the impact of the number of PAs candidate positions on the ISAC performance. As can be observed, the achieved RCRB decreases with the increment of the number of candidate positions in the discrete activation case. The reason is that, a larger number of candidate positions provides enhanced precisions for PAs activation. It is worth noting that, when $A$ increases to sufficiently large values, the performance gain of the continuous case over the discrete case is marginal. For example, when there are $500$ candidate positions over the $10$~m waveguide, there is only a $9.7\%$ and $6.8\%$ increment in RCRB caused by the discretization-induced loss, for $N=6$ and $N=10$, respectively. This is because, a dense set of candidate positions provides the discrete scheme with sufficient flexibility to activate PAs. It is also observed that increasing the number of PAs from $N=6$ to $N=10$ achieves lower converged RCRB values, as a larger number of PAs provides higher spatial DoFs for pinching beamforming. }

\subsection{Proposed AO Algorithm Convergence Behavior}
\begin{figure}
    \centering    \includegraphics[width=0.85\linewidth]{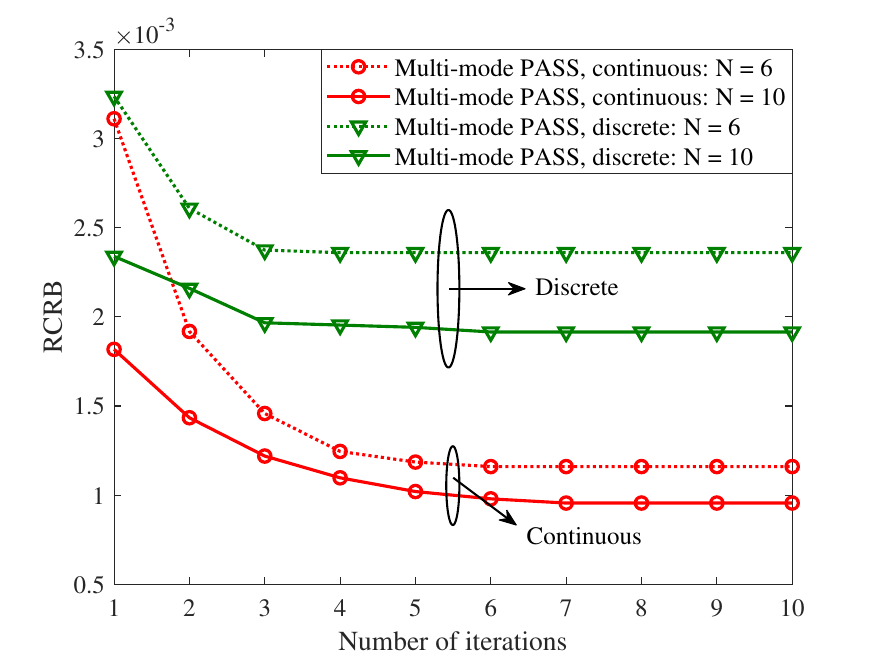}
    \caption{Convergence performance of {Algorithm~\ref{alg:AO}}.}
    \label{fig:RCRB-iter}
\end{figure}
Fig.~\ref{fig:RCRB-iter} illustrates the convergence behavior of the proposed AO algorithm under different numbers of PAs for both continuous and discrete PAs activation cases, where the RCRB is plotted against the number of iterations. As observed, the RCRBs decrease rapidly at the beginning of the iterations and gradually converge to stable values for all considered cases, demonstrating the effectiveness of the proposed AO-based algorithm. Meanwhile, we can find that a larger number of PAs leads to more iterations for convergence. However, even when $N = 10$, the proposed AO-based algorithm still converges within only a few iterations. This fast convergence speed for both continuous and discrete activation cases implies the possible application of the proposed AO-based algorithm in practice.

\section{Conclusions}
In this paper, we first proposed the physical model of multi-mode PASS, where multiple guided modes were excited within a single waveguide to act as independent signal-bearing channels. We explicitly investigated the undesired guided-modes coupling under the local PA-induced electromagnetic perturbations, and revealed the resulting IMI issue. To construct a tractable signal model, we established the hardware-design conditions for achieving the IMI-free regime, which was further verified through full-wave simulations. We then studied ISAC as a representative application scenario of multi-mode PASS, and formulated a joint baseband and pinching beamforming problem. To address the resulting non-convex optimization problems, an AO-based framework was developed. For the baseband beamforming, a penalty-based SCA algorithm was invoked. For the pinching beamforming, the PSO algorithm and two-sided matching algorithm were proposed for the continuous and discrete activation cases, respectively. Numerical results obtained in the ISAC scenario demonstrated the advantages of the proposed IMI-free multi-mode PASS model over the single-mode PASS in terms of providing additional modal DoFs.
%, through validating the effectiveness of the additional modal DoFs in alleviating the communication-sensing tradeoff. 
\begin{appendices}
\label{app:IMI-free}
\section{Proof of~\textbf{Proposition~1}}
\label{app:proof}
The IMI observed at mode $m$ contains the direct leakage from other guided modes $m'\neq m$, as well as the corresponding PA-mediated and non-PA-mediated higher-order transition components. 
First consider the direct signal leakage from guided mode $m'$ to $m$. Evaluating the integral in Eq.~\eqref{eq:m-n-direct-transition} yields
\begin{equation}
    T_{mm'}^{(1)}(x)
    =
    -j\kappa_{mm'}x
    e^{-j\frac{(\beta_{m'}-\beta_m)x}{2}}
    \operatorname{sinc}
    \left(
        \frac{(\beta_{m'}-\beta_m)x}{2}
    \right).
    \label{eq:first_order_wg_transfer_sinc}
\end{equation}
Based on
\begin{equation}
    \left|
        x\operatorname{sinc}
        \left(
            \frac{(\Delta \beta_{mm'}) x}{2}
        \right)
    \right|
    =
    \left|
        \frac{
            2\sin(\Delta\beta_{mm'} x/2)
        }{
            \Delta\beta_{mm'}
        }
    \right|
    \leq
    \frac{2}{|\Delta\beta_{mm'}|},
    \label{eq:sinc_upper_bound}
\end{equation}
where $\Delta\beta_{mm'}=\beta_{m'}-\beta_m$, we obtain
\begin{equation}
    \left|
        T_{mm'}^{(1)}(x)
    \right|
    \leq
    \frac{
        2|\kappa_{mm'}|
    }{
        |\beta_{m'}-\beta_m|
    }
    =
    \epsilon_{mm'}
    \leq
    \epsilon.
    \label{eq:first_order_wg_bound}
\end{equation}
Hence, the direct signal leakage from $m'$ to $m$ is
bounded by $\mathcal{O}(\epsilon)$.

Given the modal-selectivity characteristic of each PA, the PA-mediated transition from guided mode $m'$ to $m$ contains at least one undesired transition. Assuming that the considered modal couplings are reciprocal, the direct transition between the PA and its unmatched guided mode $m''$ satisfies
\begin{equation}
    \left|
        T_{m''\text{p}}^{(1)}(x)
    \right|=\left|
        T_{\text{p}m''}^{(1)}(x)
    \right|
    \leq
    \frac{
        2|\kappa_{\text{p}m''}|
    }{
        |\beta_{\text{p}}-\beta_{m''}|
    }
    =
    \epsilon_{\text{p}m''}
    \leq
    \epsilon.
    \label{eq:first_order_pa_bound}
\end{equation}

Under the higher-order non-resonance assumption, any higher-order coupling path containing at least one undesired modal transition is assumed not to experience appreciable resonant enhancement~\cite{Huang2009CMT}. Therefore, a coupling path containing $o$ undesired transitions preserves the amplitude transfer coefficient of order $\mathcal{O}(\epsilon^o)$. Since the coupling path from $m'$ to $m$ involves at least one undesired modal transition, we have
\begin{equation}
    T_{mm'}(x)
    =
    \mathcal{O}(\epsilon),
    \qquad m\neq m'.
    \label{eq:total_cross_transfer_order}
\end{equation}
Substituting \eqref{eq:total_cross_transfer_order} into
\eqref{eq:modal_input_output_relation} gives
\begin{equation}
    A_m(x)
    =
    T_{mm}(x)s_m
    +
    \mathcal{O}(\epsilon)
    \sum_{m'\neq m}s_{m'}.
    \label{eq:modal_signal_approximation}
\end{equation}

Assume that the transmitted symbols are mutually independent and zero mean, the aggregated IMI power associated with mode $m$ is given by
\begin{align}
    P_{m}^{\mathrm{I}}
    =
    \sum_{m'\neq m}
    |T_{mm'}(x)|^2P_{\text{s}}=
    \mathcal{O}
    \left(
        (M-1)\epsilon^2
    \right)P_{\text{s}}.
    \label{eq:interference_power_proof}
\end{align}
Therefore, the IMI-to-signal power ratio satisfies
\begin{equation}
    \frac{P_{m}^{\mathrm{I}}}{P_{\text{s}}}=\mathcal{O}\left(\left(M-1\right)\epsilon^2\right), \forall m.
\end{equation}

\section{Derivations of $\mathbf{J}_{11}$, $\mathbf{J}_{12}$, and $\mathbf{J}_{22}$}
\label{app:FIM}
According to~\cite{CRB}, the received echo signal matrix $\mathbf{Y}_{\text{s}}\left(\mathbf{x}_{\text{p}},\mathbf{w},\mathbf{R}_{\mathbf{z}}\right)$ can be vectorized as follows:
\begin{equation}
\label{eq:vec-y}
\mathbf{y}_{\text{s}}=\varrho\text{vec}\left(\mathbf{a}\left(\theta_{\text{b}}\right)\mathbf{h}_{\text{s}}^H\left(\mathbf{x}_{\text{p}}\right)\mathbf{G}\left(\mathbf{x}_{\text{p}}\right)\mathbf{S}\right)+\text{vec}\left(\mathbf{N}_{\text{b}}\right).
\end{equation}
$\mathbf{y}_{\text{s}}$ follows the Gaussian distribution, i.e., 
$\mathbf{y}_{\text{s}}\sim\mathcal{CN}\left(\mathbf{q},\sigma_{\text{s}}^2\mathbf{I}_{KT}\right)$, where $\mathbf{q}=\varrho\text{vec}\left(\mathbf{a}\left(\theta_{\text{b}}\right)\mathbf{h}_{\text{s}}^H\left(\mathbf{x}_{\text{p}}\right)\mathbf{G}\left(\mathbf{x}_{\text{p}}\right)\mathbf{S}\right)$. 
We define the unknown parameter $\boldsymbol{\xi}=[{\boldsymbol{\psi}}^T,{\boldsymbol{\varrho}}^T]$, where $\boldsymbol{\psi}=[x_{\text{s}},y_{\text{s}}]^T$, $\boldsymbol{\varrho}=[\varrho_{\text{r}},\varrho_{\text{i}}]^T$, and $\varrho_{\text{r}}$ and $\varrho_{\text{i}}$ denote the real and imaginary parts of $\varrho$, respectively. Then, the FIM is given by~\cite{CRB}
\begin{equation}
\label{eq:element of FIM}
\mathbf{J}_{\boldsymbol{\xi}}=\frac{2}{\sigma_{\text{s}}^2}\text{Re}\left\{\frac{\partial\mathbf{q}^H}{\partial\boldsymbol{\xi}}\frac{\partial\mathbf{q}}{\partial\boldsymbol{\xi}}\right\}=\begin{bmatrix}
\ \mathbf{J}_{11},\mathbf{J}_{12}\\\mathbf{J}_{12}^T,\mathbf{J}_{22}
\end{bmatrix},
\end{equation}
where 
$\mathbf{J}_{11}=\begin{bmatrix}J_{x_{\text{s}}x_{\text{s}}} J_{x_{\text{s}}y_{\text{s}}}\\J_{x_{\text{s}}y_{\text{s}}} J_{y_{\text{s}}y_{\text{s}}}
\end{bmatrix}$, $\mathbf{J}_{12}=\begin{bmatrix}J_{x_{\text{s}}\varrho_{\text{r}}} J_{x_{\text{s}}\varrho_{\text{i}}}\\J_{y_{\text{s}}\varrho_{\text{r}}} J_{y_{\text{s}}\varrho_{\text{i}}}
\end{bmatrix}$, and $\mathbf{J}_{22}=\begin{bmatrix}J_{\varrho_{\text{r}}\varrho_{\text{r}}} & 0\\0 &J_{\varrho_{\text{i}}\varrho_{\text{i}}}
\end{bmatrix}$. Subsequently, defining $\mathbf{D}\triangleq\mathbf{a}\left(\theta_{\text{b}}\right)\mathbf{h}_{\text{s}}^H(\mathbf{x}_{\text{p}})$, the partial derivatives of $\mathbf{q}$ with respect to $\boldsymbol{\psi}$ and $\boldsymbol{\beta}$ are given by
\begin{equation}
\label{eq:derivation of q_1}  
\frac{\partial\mathbf{q}
}{\partial \boldsymbol{\psi}^T}=\varrho\left[\text{vec}\left(\dot{\mathbf{D}}_{x_{\text{s}}}\mathbf{G}\left(\mathbf{x}_{\text{p}}\right)\mathbf{S}\right),\text{vec}\left(\dot{\mathbf{D}}_{y_{\text{s}}}\mathbf{G}\left(\mathbf{x}_{\text{p}}\right)\mathbf{S}\right)\right],
\end{equation}
\begin{equation}
\label{eq:derivation of q_2}  
\frac{\partial\mathbf{q}
}{\partial \boldsymbol{\varrho}^T}=\text{vec}\left(\mathbf{D}\mathbf{G}\left(\mathbf{x}_{\text{p}}\right)\mathbf{S}\right)\left[1,j\right],
\end{equation}
where $\dot{\mathbf{D}}_{u}=\frac{\partial\mathbf{D}
}{\partial u}=\frac{\partial\mathbf{a}\left(\theta_{\text{b}}\right)
}{\partial u}\mathbf{h}_{\text{s}}^H\left(\mathbf{x}_{\text{p}}\right)+\mathbf{a}\left(\theta_{\text{b}}\right)\frac{\partial\mathbf{h}_{\text{s}}^H(\mathbf{x}_{\text{p}})
}{\partial u},\forall u\in\{x_{\text{s}},y
_{\text{s}}\}$. 
%Specifically, the partial derivatives of $\mathbf{a}\left(\theta_{\text{b}}\right)$ with respect to $x_{\text{s}}$ and $y_{\text{s}}$ are given by
%\begin{equation}
%\begin{aligned}
%\label{eq:derivation of a_x}  
%&\frac{\partial\mathbf{a}\left(\theta_{\text{b}}\right)
%}{\partial x_{\text{s}}}=j\beta_0\frac{\lambda x_{\text{s}}y_{\text{s}}}{2(x_{\text{s}}^2+y_{\text{s}}^2)^{\frac{3}{2}}}\\&\times\left[0,e^{-j\beta_0\frac{\lambda\sin\theta_{\text{b}}}{2}},\cdots,\left(K-1\right)e^{-j\beta_0\frac{\left(K-1\right)\lambda\sin\theta_{\text{b}}}{2}}\right]^T, 
%\end{aligned}
%\end{equation}
%\begin{equation}
%\begin{aligned}
%\label{eq:derivation of a_y}  
%&\frac{\partial\mathbf{a}\left(\theta_{\text{b}}\right)
%}{\partial y_{\text{s}}}=-j\beta_0\frac{\lambda x_{\text{s}}^2}{2(x_{\text{s}}^2+y_{\text{s}}^2)^{\frac{3}{2}}}\\&\times\left[0,e^{-j\beta_0\frac{\lambda\sin\theta_{\text{b}}}{2}},\cdots,(K-1)e^{-j\beta_0\frac{(K-1)\lambda\sin\theta_{\text{b}}}{2}}\right]^T.
%\end{aligned}
%\end{equation}

With sufficiently large sensing frame length $T$, the covariance matrix $\mathbf{R}_{\text{s}}$ can be approximated as $\mathbf{R}_{\text{s}}\approx\frac{1}{T}\mathbf{S}\mathbf{S}^H$. For any $u,v\in\{x_{\text{s}},y_{\text{s}}\}$, the entry of the matrix $\mathbf{J}_{11}$ is given by 
\begin{equation}
\label{eq:J11}
J_{uv} = \frac{2|\varrho|^2T}{\sigma_{\text{s}}^2} \text{Re}\left\{
\mathrm{tr} \left(\dot{\mathbf{D}}_{v}\mathbf{G}\left(\mathbf{x}_{\text{p}}\right)\mathbf{R}_{\text{s}}\mathbf{G}^H\left(\mathbf{x}_{\text{p}}\right)\dot{\mathbf{D}}_{u}^H\right)\right\}.
\end{equation}
 The matrices 
 $\mathbf{J}_{12}$ and $\mathbf{J}_{22}$ are given by
\begin{equation}
\begin{aligned}
\label{eq:J12}
&\mathbf{J}_{12} = \\&\frac{2T}{\sigma_{\text{s}}^2}\text{Re}
\left\{\begin{bmatrix}\varrho^*\mathrm{tr}\left(\mathbf{D}\mathbf{G}\left(\mathbf{x}_{\text{p}}\right)\mathbf{R}_{\text{s}}\mathbf{G}^H\left(\mathbf{x}_{\text{p}}\right)\dot{\mathbf{D}}_{x_{\text{s}}}^H\right)\\\varrho^*\mathrm{tr}\left(\mathbf{D}\mathbf{G}\left(\mathbf{x}_{\text{p}}\right)\mathbf{R}_{\text{s}}\mathbf{G}^H\left(\mathbf{x}_{\text{p}}\right)\dot{\mathbf{D}}_{y_{\text{s}}}^H\right)
\end{bmatrix}
\begin{bmatrix}
1, j
\end{bmatrix}
\right\},
\end{aligned}
\end{equation}
\begin{equation}
\label{eq:J22}
\mathbf{J}_{22} = \frac{2T}{\sigma_{\text{s}}^2} \mathbf{I}_2\text{Re}\left\{
\mathrm{tr} \left(\mathbf{D}\mathbf{G}\left(\mathbf{x}_{\text{p}}\right)\mathbf{R}_{\text{s}}\mathbf{G}^H\left(\mathbf{x}_{\text{p}}\right)\mathbf{D}^H\right)\right\}.
\end{equation}

\end{appendices}

\bibliographystyle{IEEEtran}
\bibliography{sample}

@STRING{IEEE_J_COML       = "{IEEE} Commun. Lett."}

@STRING{IEEE_J_WCOML       = "{IEEE} Wireless Commun. Lett."}

@STRING{IEEE_J_JSAC       = "{IEEE} J. Sel. Areas Commun."}

@STRING{IEEE_J_COM        = "{IEEE} Trans. Commun."}

@STRING{IEEE_J_SP         = "{IEEE} Trans. Signal Process."}

@STRING{IEEE_J_WCOM       = "{IEEE} Trans. Wireless Commun."}

@STRING{IEEE_J_VT         = "{IEEE} Trans. Veh. Technol."}

@STRING{IEEE_WM_COM        = "{IEEE} Wireless Commun."}

@STRING{IEEE_J_STSP  = "{IEEE} J. Sel. Topics Signal Process."}

@article{pinching2,
  author={Yang, Zheng and Wang, Ning and Sun, Yanshi and Ding, Zhiguo and Schober, Robert and Karagiannidis, George K. and Wong, Vincent W.S. and Dobre, Octavia A.},
  journal=IEEE_WM_COM, 
  title={Pinching Antennas: {Principles}, Applications and Challenges}, 
  year={2026},
  volume={33},
  number={2},
  pages={175-184},
  month={Apr.}}

@ARTICLE{pinching-arraygain,
  author={Ouyang, Chongjun and Wang, Zhaolin and Liu, Yuanwei and Ding, Zhiguo},
  journal=IEEE_J_COML, 
  title={Array Gain for Pinching-Antenna Systems {(PASS)}}, 
  year={2025},
  volume={29},
  number={6},
  pages={1471-1475},
  month={Jun.}}

@article{pinching-uplink,
  author={Tegos, Sotiris A. and Diamantoulakis, Panagiotis D. and Ding, Zhiguo and Karagiannidis, George K.},
  journal=IEEE_J_WCOML, 
  title={Minimum Data Rate Maximization for Uplink Pinching-Antenna Systems}, 
  year={2025},
  volume={14},
  number={5},
  pages={1516-1520},
  month={May}}

@ARTICLE{pinching-noma,
  author={Xu, Yanqing and Ding, Zhiguo and Cai, Donghong and Wong, Vincent W.S.},
  journal=IEEE_J_COM, 
  title={{QoS}-Aware {NOMA} Design for Downlink Pinching-Antenna Systems}, 
  year={2025},
  volume={73},
  number={12},
  pages={13611-13625},
  month={Dec.}}

@ARTICLE{pinching-PLS,
  author={Wang, Kaidi and Ding, Zhiguo and Al-Dhahir, Naofal},
  journal=IEEE_J_WCOML,
  title={Pinching-Antenna Systems for Physical Layer Security}, 
  year={2026},
  volume={15},
  number={},
  pages={260-264}}

@ARTICLE{CRB,
  author={Liu, Fan and Liu, Ya-Feng and Li, Ang and Masouros, Christos and Eldar, Yonina C.},
  journal=IEEE_J_SP, 
  title={Cramér-Rao Bound Optimization for Joint Radar-Communication Beamforming}, 
  year={2022},
  volume={70},
  number={},
  pages={240-253},
  }

@ARTICLE{Massive-mimo,
  author={Yu, Xianghao and Shen, Juei-Chin and Zhang, Jun and Letaief, Khaled B.},
  journal=IEEE_J_STSP, 
  title={Alternating Minimization Algorithms for Hybrid Precoding in Millimeter Wave {MIMO} Systems}, 
  year={2016},
  volume={10},
  number={3},
  pages={485-500},
  month={Apr.}}

@misc{cvx,
  author       = {Michael Grant and Stephen Boyd},
  title        = {{CVX}: Matlab Software for Disciplined Convex Programming, version 2.1},
  howpublished = {\url{https://cvxr.com/cvx}},
  month        = mar,
  year         = 2014
}

@ARTICLE{tutorial-lyw,
  author={Liu, Yuanwei and Jiang, Hao and Xu, Xiaoxia and Wang, Zhaolin and Guo, Jia and Ouyang, Chongjun and Mu, Xidong and Ding, Zhiguo and Nallanathan, Arumugam and Karagiannidis, George K. and Schober, Robert},
  journal=IEEE_J_COM, 
  title={Pinching-Antenna Systems {(PASS)}: {A} Tutorial}, 
  year={2026},
  volume={74},
  number={},
  pages={4881-4918}}

@ARTICLE{modeling-wzl,
  author={Wang, Zhaolin and Ouyang, Chongjun and Mu, Xidong and Liu, Yuanwei and Ding, Zhiguo},
  journal=IEEE_J_COM, 
  title={Modeling and Beamforming Optimization for Pinching-Antenna Systems}, 
  year={2025},
  volume={73},
  number={12},
  pages={13904-13919},
  month={Dec.}}

@ARTICLE{Related-ISAC-LHC,
  author={Li, Haochen and Zhong, Ruikang and Pan, Zhiwen and Dong, Chao and Lei, Jiayi and Liu, Yuanwei},
  journal=IEEE_J_WCOM, 
  title={Pinching Antenna Systems for Integrated Sensing and Communications}, 
  year={2026},
  volume={25},
  number={},
  pages={13416-13429},
  }

@article{Mohanty2017,
  author  = {Aseema Mohanty and Mian Zhang and Avik Dutt
             and Sven Ramelow and Paulo Nussenzveig and Michal Lipson},
  title   = {Quantum Interference Between Transverse Spatial Waveguide Modes},
  journal = {Nat. Commun.},
  volume  = {8},
  pages   = {14010},
  month   = {Jan.},
  year    = {2017}
}

@book{Huang1984CMT,
  author    = {H. Huang},
  title     = {Coupled Mode Theory: As Applied to Microwave 
               and Optical Transmission},
  address   = {Utrecht, The Netherlands},
  publisher = {VNU Science Press},
  year      = {1984},
  isbn      = {978-90-6764-033-6}
}

@article{Huang2009CMT,
  author  = {W.-P. Huang and J. Mu},
  title   = {Complex Coupled-Mode Theory for Optical Waveguides},
  journal = {Opt. Express},
  volume  = {17},
  number  = {21},
  pages   = {19134--19152},
  year    = {2009},
  doi     = {10.1364/OE.17.019134}
}

@ARTICLE{Clercparticle2002,
  author={Clerc, M. and Kennedy, J.},
  journal={IEEE Trans. Evol. Comput.}, 
  title={The particle swarm-explosion, stability, and convergence in a multidimensional complex space}, 
  year={2002},
  month={Feb.},
  volume={6},
  number={1},
  pages={58-73},
  doi={10.1109/4235.985692}}

@book{matching1,
    author = {Roth, A. E and Sotomayor, M. A. O},
    title = {Two-Sided Matching: a study in game-theoretic modeling and analysis},
    publisher = {Cambridge, U.K.: Cambridge Univ. Press},
    year = {1992}
}

@article{Yariv1973CMT,
  author  = {A. Yariv},
  title   = {Coupled-Mode Theory for Guided-Wave Optics},
  journal = {{IEEE} J. Quantum Electron.},
  volume  = {9},
  number  = {9},
  pages   = {919--933},
  month   = sep,
  year    = {1973},
  doi     = {10.1109/JQE.1973.1077767}
}

@book{Skolnik2001Radar,
  author    = {Merrill I. Skolnik},
  title     = {Introduction to Radar Systems},
  edition   = {3rd},
  publisher = {McGraw-Hill},
  address   = {New York, NY, USA},
  year      = {2001}
}

@article{xiaoxia-multimode,
  title={Multi-Mode Pinching Antenna Systems Enabled Multi-User Communications},
  author={Xu, Xiaoxia and Mu, Xidong and Liu, Yuanwei and Nallanathan, Arumugam},
  journal={arXiv preprint  arXiv:2601.20780},
  year={2026}
}

@article{shaodan-multimode,
  title={Multi-Mode Pinching-Antenna Systems:
Polarization-Aware Full-Wave Modeling and
Optimizatio},
  author={D. Wei and R. Zhang and Y. Shao and F. Hou and S. Ma},
  journal={arXiv preprint  arXiv:2604.01778},
  year={2026}
}

@ARTICLE{10945421,
  author={Ding, Zhiguo and Schober, Robert and Vincent Poor, H.},
  journal={{IEEE Trans. Commun.}}, 
  title={Flexible-Antenna Systems: A Pinching-Antenna Perspective}, 
  year={2025},
  volume={73},
  number={10},
  pages={9236-9253},
  month={Oct.},
  doi={10.1109/TCOMM.2025.3555866}}

@ARTICLE{9264694,
  author={Wong, Kai-Kit and Shojaeifard, Arman and Tong, Kin-Fai and Zhang, Yangyang},
  journal=IEEE_J_WCOM, 
  title={Fluid Antenna Systems}, 
  year={2021},
  volume={20},
  number={3},
  pages={1950-1962},
  month={Mar.}}

@ARTICLE{10318061,
  author={Zhu, Lipeng and Ma, Wenyan and Zhang, Rui},
  journal=IEEE_J_WCOM, 
  title={Modeling and Performance Analysis for Movable Antenna Enabled Wireless Communications}, 
  year={2024},
  volume={23},
  number={6},
  pages={6234-6250},
  month={June}}

@ARTICLE{marco,
  author={Di Renzo, Marco and Zappone, Alessio and Debbah, Merouane and Alouini, Mohamed-Slim and Yuen, Chau and de Rosny, Julien and Tretyakov, Sergei},
  journal=IEEE_J_JSAC, 
  title={Smart Radio Environments Empowered by Reconfigurable Intelligent Surfaces: How It Works, State of Research, and The Road Ahead}, 
  year={2020},
  volume={38},
  number={11},
  pages={2450-2525},
  month={Nov.}}

@ARTICLE{11202577,
  author={Wang, Zhaolin and Ouyang, Chongjun and Mu, Xidong and Liu, Yuanwei and Ding, Zhiguo},
  journal={{IEEE Trans. Commun.}}, 
  title={Modeling and Beamforming Optimization for Pinching-Antenna Systems}, 
  year={2025},
  volume={73},
  number={12},
  pages={13904-13919},
  month={Dec.},
  doi={10.1109/TCOMM.2025.3621049}}

@ARTICLE{11263923,
  author={Xu, Xiaoxia and Mu, Xidong and Wang, Zhaolin and Liu, Yuanwei and Nallanathan, Arumugam},
  journal={{IEEE Trans. Commun.}}, 
  title={Pinching-Antenna Systems ({PASS}): Power Radiation Model and Optimal Beamforming Design}, 
  year={2026},
  volume={74},
  number={},
  pages={2160-2175},
  month={Nov,},
  doi={10.1109/TCOMM.2025.3636083}}

@ARTICLE{11488474,
  author={Ding, Zhiguo and Schober, Robert and Poor, H. Vincent},
  journal=IEEE_J_WCOM, 
  title={Environment Division Multiple Access ({EDMA}): A Feasibility Study via Pinching Antennas}, 
  year={2026},
  volume={25},
  number={},
  pages={15675-15691},
  month={Apr.},
  doi={10.1109/TWC.2026.3683028}}

@ARTICLE{11212813,
  author={Mao, Weihao and Lu, Yang and Xu, Yanqing and Ai, Bo and Dobre, Octavia A. and Niyato, Dusit},
  journal=IEEE_J_WCOM, 
  title={Multi-Waveguide Pinching Antennas for {ISAC}}, 
  year={2025},
  volume={25},
  number={},
  pages={5846-5858},
  month={Oct.},
  doi={10.1109/TWC.2025.3621316}}

@ARTICLE{11599010,
  author={Yu, Cong and Ai, Bo and Gan, Xu and Liu, Yuanwei and Shi, Guowei and Chen, Wei},
  journal=IEEE_J_WCOM, 
  title={Equivalent Radiation Control for {ISAC} in Pinching Antenna Systems: A Discrete Activation Framework}, 
  year={2026},
  volume={25},
  number={},
  pages={19825-19841},
  month={Jul.},
  doi={10.1109/TWC.2026.3707879}}

@ARTICLE{11288076,
  author={Hu, Yanglin and Zhang, Tiankui and Xu, Xiaoxia and Liu, Yuanwei},
  journal=IEEE_J_VT, 
  title={Pinching Antenna-Enabled Integrated Sensing and Communication for Low-Altitude {UAV}}, 
  year={2026},
  volume={75},
  number={6},
  pages={11728-11733},
  month={June},
  doi={10.1109/TVT.2025.3641978}}

@article{Marcatili1969,
  author  = {E. A. J. Marcatili},
  title   = {Dielectric Rectangular Waveguide and Directional Coupler for Integrated Optics},
  journal = {Bell Syst. Tech. J.},
  volume  = {48},
  number  = {7},
  pages   = {2071--2102},
  month   = {Sep.},
  year    = {1969}
}

\end{document}